\def\argmax{\mathop{\rm arg\,max}}
\def\argmin{\mathop{\rm arg\,min}}
\newcommand{\SY}{\Sigma_Y} 
\newcommand{\I}{\mathbb{I}}
\newcommand{\ex}{X} 
\newcommand{\ey}{Y} 
\newcommand{\eu}{U} 
\newcommand{\fq}[1]{\mathfrak{q}_{#1}}
\newcommand{\cP}{\mathcal{P}}
\newcommand{\cW}{\mathcal{W}}
\newcommand{\cX}{\mathcal{X}}  
\newcommand{\cY}{\mathcal{Y}}  
\newcommand{\cU}{\mathcal{U}}  
\newcommand{\cS}{\mathcal{Z}}  
\newcommand{\cM}{\mathcal{M}}
\newcommand{\cC}{\mathcal{C}}
\newcommand{\cL}{\mathcal L}
\newcommand{\f}{F_\theta} 
\newcommand{\G}{G} 
\newcommand{\qs}[1]{q_{#1}} 
\newcommand{\contf}{\nu_\theta} 
\numberwithin{equation}{section}
\theoremstyle{plain}
\newtheorem{theorem}{Theorem} \newtheorem{proposition}{Proposition}[section]  \newtheorem{corollary}{Corollary} \newtheorem{assumption}{Assumption} \newtheorem{remark}{Remark}[section]  
\newtheorem{definition}{Definition}[section]
\theoremstyle{definition}
\newtheorem{example}{Example}
\newcommand{\citeposs}[1]{\citeauthor{#1}'s \citeyearpar{#1}} 
\begin{document}

\title{Universal Inference for Incomplete Discrete Choice Models\thanks{This paper builds on earlier work by the authors titled ``Robust Likelihood-ratio Tests in Incomplete Economic Models'', which evolved into two papers including this paper. We thank Jean-Jacques Forneron for his comments on the earlier work and for pointing us to work on universal inference.

We are grateful to \'{A}ureo de Paula, Hidehiko Ichimura, Toru Kitagawa, Seojeong Lee, Simon Lee, Kirill Ponomarev, Azeem Shaikh, and Alex Torgovitsky for their helpful comments. We thank, for their comments, the seminar, lecture, and conference participants at the following places: Chicago, CyberAgent (AI lab), KER International Conference, Kyoto Summer Workshop in Applied Economics, and EEA-ESEM. We thank Undral Byambadalai, Yan Liu, Junwen Lu, Jiahui Guo and Jie Huang for their excellent research assistance.
Financial support from NSF grants SES-1824344 and SES-2018498 is gratefully acknowledged.   Part of the research was conducted at Microsoft Research (New England) and the University of Tokyo while Kaido was visiting them. Financial support from the Young Scientists Fund of the National Natural Science Foundation of China (NSFC) grant No.72203077 is gratefully acknowledged.}}

\author{
Hiroaki Kaido \\  Boston University \\ \url{hkaido@bu.edu}  
\and
Yi Zhang \\   Jinan University \\ \url{yzhangjnu@outlook.com} 
}

\maketitle

\begin{abstract}
A growing number of empirical models exhibit set-valued predictions. 
This paper develops a tractable inference method with finite-sample validity for such models. The proposed procedure uses a robust version of the universal inference framework by \cite{Wasserman:2020aa} and avoids using moment selection tuning parameters, resampling, or simulations. The method is designed for 
constructing confidence intervals for counterfactual objects and other functionals of the underlying parameter. It can be used in applications that involve model incompleteness,  discrete and continuous covariates, and parameters containing nuisance components.

\vspace{0.3in}
\noindent\textbf{Keywords:} Universal inference, Incomplete models, Discrete choice, Counterfactual analysis, Incidental parameters
\end{abstract}

\clearpage
\section{Introduction}

\onehalfspacing
Discrete choice models are widely used to study economic agents' decisions. Such models may make \emph{set-valued predictions} when the researcher is willing to relax restrictive assumptions. A commonly used structure is that given observable and unobservable exogenous variables $(X,U)$, multiple values $G(U|X;\theta)$ of an outcome variable $Y$ are predicted. However, how $Y$ is selected from the predicted set is unspecified.
 A growing number of empirical models exhibit such predictions. Examples include but are not limited to discrete games \citep{bresnahan1990entry,Ciliberto:2009aa}, dynamic panel data models \citep{Honore:2006aa,chesher2024robust},
discrete choice models with heterogeneous choice sets \citep{bar:cou:mol:tei21}, models with instrumental variables \citep{che:ros17,Berry:2022aa}, auctions \citep{Haile:2003to}, network formation \citep{Miyauchi:2016aa,Sheng2020}, product offerings \citep{EIZENBERG:2014aa}, exporter's decisions \citep{Dickstein:2018aa} and school choices \citep{FackGrenetHe17}.
This class nests classic discrete choice models such as binary, multinomial, and ordered choice models, in which $G$ contains a single value of $Y$. 
Despite the recent developments in the econometrics literature, the existing methods for estimating such \emph{incomplete models} often face challenges in applications.  In particular, the limiting distributions of the existing statistics for full-vector and sub-vector inference are often non-standard and require complex regularity conditions or tuning parameters.

We propose a novel inference method for conducting hypothesis tests and constructing confidence sets (or intervals) in incomplete models.
Specifically, we consider testing the composite hypotheses:
\begin{align}
H_0:\theta\in\Theta_0~~~\text{v.s.}~~~H_0:\theta\in\Theta_1~,
\end{align}
for subsets $\Theta_0$ and $\Theta_1$ of a parameter space $\Theta$.  Our proposal is to compare a tailor-made likelihood-ratio statistic to a \emph{fixed} critical value, which builds on the \emph{universal inference} method of \cite{Wasserman:2020aa}. We show that the test is universally valid, meaning the proposed test has a correct size in any finite sample without complex regularity conditions.

Since the critical value is fixed, the proposed method remains computationally tractable in models of practical relevance. For example, we allow models to involve a number of discrete or continuous covariates and their associated parameters. 
The proposed test also avoids the use of moment selection or other regularization tuning parameters.
Inverting the test yields confidence sets (or intervals) for the structural parameter, their subcomponents, and counterfactual objects.

This paper aims to develop a versatile inference procedure that is widely applicable. 
We demonstrate that any incomplete discrete choice model possesses a structure that allows us to conduct the proposed test.
The main idea is as follows: When trying to distinguish between the null parameter set $\Theta_0$ and any (unrestricted) parameter value $\theta_1\in \Theta$, a classic idea is to use a likelihood ratio (LR) to compare the model fit at each $\theta\in\Theta_0$ against $\theta_1$. While the model incompleteness makes it difficult to apply this idea to the current setting, we show that a special pair of densities called \emph{least-favorable pair} exists for testing $\theta\in\Theta_0$ against $\theta_1$. This insight allows us to construct a parametric model $\{q_\theta\}$ indexed only by $\theta$, which is useful for constructing a robust test.\footnote{This parametric model can be derived analytically in many existing models \citep[see][]{kaido2019robust}. We also provide a command to implement the procedure as part of a general Python library for incomplete discrete choice models (\url{https://github.com/hkaido0718/IncompleteDiscreteChoice}).}
Using this construction, we obtain a novel LR statistic for universal inference.

Our framework allows the researcher to leave parts of individuals' decision processes unspecified and heterogeneous.
For example, \cite{bar:cou:mol:tei21} left the unknown choice-set formation process flexible.  The choice of an insurance plan may involve some buyers forming their choice set using an online platform, while others may form their choice sets based on recommendations by their insurance agents. These potentially heterogeneous choice set formation processes are unobservable to the econometrician and may not be understood well enough to be modeled.\footnote{If the researcher understands how the process operates, they can formulate it using a probabilistic model. Some studies take this approach \citep{bjorn_vuong_1984,Bajari:2010aa}.}
In an extreme case, each decision maker may have their own choice-set formation process, rendering them \emph{incidental parameters}.
While some empirical studies impose regularity conditions that implicitly limit the heterogeneity of decision processes, our proposed procedure remains valid without such assumptions.\footnote{Existing studies often impose some assumption on the heterogeneity of outcome distributions to apply limit theorems.  A leading example is to assume identically distributed outcomes, which holds under identically distributed selection mechanisms. A weaker assumption is that samples are stationary and strongly mixing \citep{chernozhukov2007estimation,andrews2010inference}, which imposes implicit restrictions on the selection mechanisms.
An exception is \cite{Epstein:2016qv}, which allowed arbitrary heterogeneity and dependence.}

The proposed procedure has two novel features compared to standard treatments of LR tests. Integrating them into the standard inference toolkit is straightforward. 
First, it utilizes a tailor-made likelihood function that can be derived by solving a convex program. We provide an algorithm and demonstrate that leading examples have closed-form likelihoods. Second, the test employs cross-fitting. Specifically, we obtain a parameter estimate from one subsample, use the other subsample to evaluate the likelihood ratio, and aggregate the statistics after swapping the roles of the two subsamples. This approach enables us to apply a straightforward conditioning argument and a Chernoff-style bound developed in \cite{Wasserman:2020aa}, yielding a simple yet non-trivial critical value. 
  
The method proposed here offers a tractable way for practitioners to perform inference on various models while avoiding ad-hoc assumptions. It is particularly effective in settings where the outcome is relatively low-dimensional, involves discrete and continuously distributed covariates, and $\theta$ may contain nuisance parameters. Such examples are ubiquitous. 
  For the theoretical front, this paper provides a new likelihood-based inference method with finite sample validity with a focus on inference for subvectors and counterfactual objects. 
  To keep a tight focus, we limit our attention to the finite-sample validity of the test and evaluate its power only through simulations. The proposed procedure is not designed to be optimal due to the use of sample-splitting and a Chernoff-style bound. In a companion paper \citep{kaido2019robust}, we build a theoretical framework for analyzing the power of likelihood-ratio tests and their optimality.

\subsection{Relation to the literature}
The study of incomplete systems has a long history dating back to the work of \cite{Wald1950}. Economic models with multiple equilibria are well-known examples of incomplete models. \cite{jovanovic1989observable} developed a theoretical framework to examine the empirical content of such models. The class of incomplete models considered here also covers a wide range of empirical models beyond them, as discussed earlier. 
Systematic ways to derive partially identifying restrictions for such models have been developed \citep{tamer2003incomplete,galichon2011set,beresteanu2011sharp,che:ros17}. Building on this line of work, we use the \emph{sharp identifying restrictions} to incorporate all information in the original structural model to construct a likelihood function.

We contribute to the literature on inference by providing a novel test that has the following properties: (i) it has finite sample validity without complex regularity conditions (universal validity); (ii) it is applicable to models with mixed data types (e.g., continuous and discrete covariates); (iii) one can construct confidence regions for the entire parameter or its functions;  (iv) it is robust to the presence of incidental parameters (selections); (v) it can accommodate nonparametric parameter components.\footnote{Each of the properties has been studied somewhat separately. For (i), other than \cite{Li:2022aa},  \cite{Horowitz:2023aa} develop a method with finite sample validity for models represented by optimization problems; (ii) For discrete covariates, one can use unconditional moment inequalities \citep[see][and references therein]{Canay_Shaikh_2017}. For continuous covariates, one needs to work with a continuum (or increasing number) of moment inequalities \citep{Andrews:2013aa,chernozhukov_lee_rosen13}. See also \cite{Kaido:2023aa} for this point who develop a method related to this paper's approach;   (iii) Subvector inference is studied, for example, by \cite{BCS,KMS,Cox:2022aa,Andrews:2023aa}; For (iv), \cite{Hahn:2010aa} tackles the problem using panel data. \cite{Epstein:2015aa,Epstein:2016qv} develop robust Bayesian and frequentist inference methods; For (v), there is a vast literature on semiparametric models \citep{POWELL19942443}.}
To our knowledge, \cite{Li:2022aa} is the only existing work that develops a finite-sample valid method for incomplete models using the idea of Monte Carlo tests.
Our proposal differs from theirs mainly in two respects.
First, we focus on composite hypothesis tests with inference on subvectors and counterfactual objects in mind, while they focus on full-vector inference. Second, we use a fixed critical value, which avoids simulated draws of latent variables used in their method to calculate critical values. Our test uses a likelihood-ratio statistic. Likelihood-based methods are also considered by \cite{Chen:2011aa,Chen_2018} and \cite{Kaido:2023aa} who show their methods' asymptotic uniform validity. In contrast, this paper aims to achieve finite-sample validity.

Likelihood-based inference is commonly used. However, as \cite{Wasserman:2020aa}  notes ``The (limiting) null distribution of the classical likelihood-ratio statistic is often intractable when used to test composite null hypotheses in irregular statistical models.''
 They develop inference methods that do not require complex regularity conditions using split-sample and cross-fit versions of LR statistics. This approach is also appealing for incomplete models, which are typically irregular.
 Nonetheless, \citeposs{Wasserman:2020aa} framework is not directly applicable to the current setting because their method uses a unique likelihood function as a key input and requires the knowledge of the sampling process.\footnote{As a baseline, they assume random sampling. To be precise, they assume the knowledge of the conditional likelihood based on a subsample given another subsample, which is unknown in incomplete models. See Section \ref{ssec:overview}.} Neither of them is readily available in incomplete models. This is because, for each parameter, the model implies multiple (typically infinitely many) likelihoods. 
Furthermore, the unspecified selection mechanism (together with observable and unobservable variables) can vary across experiments. We address these issues by carefully constructing a likelihood function using the model structure, hence the name ``tailor-made'' likelihood.

The universal validity is related to the \emph{asymptotic uniform validity} requirements shown for many of the existing proposals based on moment inequality restrictions \citep[see][for a thorough review]{Canay_Shaikh_2017}. Both are the notions of the validity of an inference procedure over a wide class of data-generating processes. The former requires validity in any finite samples, whereas the latter requires it in large samples.  \cite{Wasserman:2020aa} also emphasize that they avoid regularity conditions in defining the set of data-generating processes, which we also follow here.
Some of the recently developed moment-based inference methods study subvector inference and attain computational tractability by focusing on specific classes of models or testing problems \citep{Cox:2022aa,Andrews:2023aa}. This paper focuses on the class of incomplete discrete choice models, which is not nested by (nor nests) the class of models considered in these proposals.

\section{Set-up}
Let $Y\in \cY\subseteq\mathbb R^{d_Y}$ and $X\in \cX\subseteq\mathbb R^{d_X}$ denote, respectively, observable endogenous and exogenous variables, and $U\in \cU\subseteq\mathbb R^{d_U}$ denote latent variables. We assume $\cY$ is a finite set.  Let $\cS=\cY\times\cX$. We equip $\cS$ with the Borel $\sigma$-algebra $\Sigma_{\cS}$ and let $\Delta(\cS)$ denote the set of all Borel probability measures on $(\cS,\Sigma_\cS)$.
Let $\Theta$ be a parameter space. We do not restrict $\Theta$. Hence, both parametric and nonparametric components are accommodated. Let $F_\theta(\cdot|x)$ denote the conditional distributions of $U$ given $X=x$. We let $F=\{F_\theta,\theta\in\Theta\}$ be the collection of the conditional laws.

 For each $\theta\in\Theta$, let 
 $G(\cdot|\cdot;\theta):\cU\times\cX\twoheadrightarrow \cY$ be a weakly measurable correspondence, which collects permissible outcome values for each $(x,u)$. 
The observable outcome $Y$ is a random vector satisfying
\begin{align}
    Y\in G(U|X;\theta),~ a.s. \label{eq:correspondence}
\end{align}
Such a random vector is called a \emph{measurable selection} of $G(U|X;\theta)$.
The model does not impose any restrictions on how $Y$ is selected. This structure nests models with a \emph{complete prediction}, characterized by a function $g(\cdot|\cdot;\theta):\cX\times\cU\to\cY$ such that $Y=g(U|X;\theta),~a.s.$

Let $\cC$ be the collection of all subsets of $\cY$.
Define the \emph{containtment functional} of $G$ by
\begin{align}
	\nu_\theta(A|x)\equiv \int 1\{G(u|x;\theta)\subseteq A\}dF_\theta(u|x),~A\in\cC.\label{eq:defnu}
\end{align} 
This functional characterizes \emph{all} conditional distributions of the measurable selections of $G(U|x;\theta)$ by the following set, known as the \emph{core} of $\nu_\theta$ \citep[Theorem 2.1]{art83}:
\begin{align}
	\mathcal P_{\theta,x}\equiv\{Q\in\mathcal M(\Sigma_Y,\cX):Q(A|x)\ge \nu_\theta(A|x),~A\in \cC\},\label{eq:artstein}
\end{align}
where $\cM(\SY,\cX)$ is the collection of laws of random variables supported on $\cY$ conditional on $\ex$. The inequalities $Q(\cdot|x)\ge \contf(\cdot|x)$ characterizing $\mathcal P_{\theta,x}$ are known as the \emph{sharp identifying restrictions} as they can partially identify $\theta$ without losing information \citep[see e.g.,][]{galichon2011set,mol:mol18}. Our method applies to any model characterized by such restrictions. 

Any (conditional) distribution in $\mathcal P_{\theta,x}$ can also be expressed as follows \citep{philippe1999decision}:
\begin{align}
	Q(\cdot|x)=	\int_{\cU}\eta(\cdot|u,x) dF_\theta(u|x),~	\eta(\cdot|u,x)\in \Delta(G(u|x;\theta)).\label{eq:selection_rep}
\end{align}
The unknown function $\eta(\cdot|u,x)$ is the conditional distribution of $Y$ over the prediction, representing an unknown \emph{selection mechanism}. It may represent objects with different interpretations, such as equilibrium selection mechanisms (Example \ref{ex:game1}),  choice-set formation processes (Example \ref{ex:choice_set}), fixed effects, and the initial conditions (Example \ref{ex:ddc}).\footnote{Other examples include the unobserved true value of variables in a set-valued covariate or control function  \citep{manskitamer02,han2024setvalued} and details of agents' behavior beyond minimal rationality or stability restrictions \citep{Haile:2003to,FackGrenetHe17}.}
The characterization in \eqref{eq:artstein} is tractable as it allows us to restrict the conditional distribution of $Y$ only through inequality restrictions without introducing $\eta(\cdot|u,x)$ explicitly.

Let $\mu$ be the counting measure on $\cY.$
For each $x\in\cX$, let
\begin{align}
	\fq{\theta,x}\equiv\{q:q(\cdot|x)=dQ(\cdot|x)/d\mu,~Q(\cdot|x)\in \mathcal P_{\theta,x}\}\label{eq:deffq}
\end{align}
collect the set of conditional densities compatible with $\theta$ at $x$. We then let $\fq{\theta}\equiv\{\fq{\theta,x},x\in \cX\}$.

\subsection{Examples}

We illustrate $G$ with well-known examples. 
\begin{example}[Discrete Game]\label{ex:game1}
	Consider a two-player static game of complete information \citep{bresnahan1990entry,bresnahan1991empirical,tamer2003incomplete}. Each player may either choose $y^{(j)}=0$ or $y^{(j)}=1$. Let $x^{(j)}$ and $u^{(j)}$ be player $j$'s observable and unobservable characteristics. 
 The payoff of player $j$ is 
	\begin{align}
	\pi^{(j)}=y^{(j)}\big(x^{(j)}{}{'}\delta^{(j)}+\beta^{(j)}y^{(-j)}+u^{(j)}\big),~j=1,2\label{eq:payoff}
	\end{align}
	where $y^{(-j)}\in\{0,1\}$ is the opponent's action. Let $\theta = (\beta', \delta')'$, and assume the effect of the entry of the opponent is nonpositive, i.e., $\beta^{(j)}\le 0,\forall j$. Then, the following correspondence gives the set of pure strategy Nash equilibria (PSNE):
\citep[Proposition 3.1]{beresteanu2011sharp}:
\begin{align}
G(u|x;\theta)=
\begin{cases}
\{(0, 0)\} & u\in S_{\{(0,0)\}|x;\theta}\equiv \{u:u^{(j)}<-x^{(j)}{}'\delta^{(j)},j=1,2\},\\
\{(0, 1)\} & u\in S_{\{(0,1)\}|x;\theta}\equiv \{u^{(1)}<-x^{(1)}{}'\delta^{(1)}, u^{(2)}>-x^{(2)}{}'\delta^{(2)}\}\\
&\qquad \cup\{-x^{(1)}{}'\delta^{(1)}<u^{(1)}<-x^{(1)}{}'\delta^{(1)}-\beta^{(1)}, u^{(2)}>x^{(2)}{}'\delta^{(2)}-\beta^{(2)}\},\\
\{(1, 0)\} & u\in S_{\{(1,0)\}|x;\theta}\equiv \{u^{(1)}>-x^{(1)}{}'\delta^{(1)}-\beta^{(1)}, u^{(2)}<-x^{(2)}{}'\delta^{(2)}-\beta^{(2)}\}\\
&\qquad \cup\{-x^{(1)}{}'\delta^{(1)}<u^{(1)}<-x^{(1)}{}'\delta^{(1)}-\beta^{(1)}, u^{(2)}<-x^{(2)}{}'\delta^{(2)}\},\\
\{(1, 1)\} & u\in S_{\{(1,1)\}|x;\theta}\equiv \{u:u^{(j)}>-x^{(j)}{}'\delta^{(j)}-\beta^{(j)},j=1,2\},\\
\{(1, 0), (0, 1)\} & u\in S_{\{(0,1),(1,0)\}|x;\theta}\equiv \{u:-x^{(j)}{}'\delta^{(j)}<u^{(j)}<-x^{(j)}{}'\delta^{(j)}-\beta^{(j)}, j=1, 2\}.
\end{cases}\label{eq:strategic_sub}
\end{align}
The model admits multiple equilibria when $u\in S_{\{(0,1),(1,0)\}|x;\theta}$ (see Figure \ref{fig:G_entry}). Below, we use this model as a working example. 

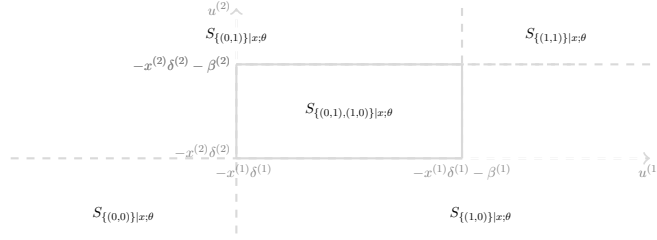
\begin{figure}[htbp]
\begin{center}
\begin{tikzpicture}[thick,scale=0.5, every node/.style={transform shape}]
		\draw[->,dashed,gray!30] (-4,0) -- (13,0) node[below] {$\color{gray}u^{(1)}$};
		\draw[->,dashed,gray!30] (2,-2) -- (2,4) node[left] {$\color{gray}u^{(2)}$};

		\draw[white] (2,2.5) -- (2,3.7);
		\draw[white] (8,0) -- (12.7,0);

		\draw (2,3.25) node {$S_{\{(0,1)\}|x;\theta}$};

		\draw (8.5,-1.5) node {$S_{\{(1,0)\}|x;\theta}$};

		\draw (5,1.25) node {$S_{\{(0,1),(1,0)\}|x;\theta}$};

		\draw[dashed,gray!30] (13,2.5) -- (2,2.5) node[left] {$\color{gray}-x^{(2)}\delta^{(2)}-\beta^{(2)}$};
		\draw[dashed,gray!30] (8,4) -- (8,0) node[below] {$\color{gray}-x^{(1)}\delta^{(1)}-\beta^{(1)}$};
		\draw[dashed,gray!30] (11,2.5) -- (2,2.5) node[left] {$\color{gray}-x^{(2)}\delta^{(2)}-\beta^{(2)}$};

		\draw[gray!30] (2,0) -- (2,2.5);
		\draw[gray!30] (8,0) -- (8,2.5);
		\draw[gray!30] (2,0) -- (8,0);
		\draw[gray!30] (2,2.5) -- (8,2.5);

		\draw (10.5,3.25) node {$S_{\{(1,1)\}|x;\theta}$};

		\draw (-1,-1.5) node {$S_{\{(0,0)\}|x;\theta}$};

		\draw[gray!30] (2,0.2) node[left] {$\color{gray}-x^{(2)}\delta^{(2)}$};
		\draw[gray!30] (2.2,0) node[below] {$\color{gray}-x^{(1)}\delta^{(1)}$};		
		\end{tikzpicture}
\caption{\footnotesize{Level sets of $\G(\cdot|x;\theta)$ with $\beta^{(j)}< 0,j=1,2$. }}
		\label{fig:G_entry}
		\end{center}
\end{figure}
\end{example}

The next example concerns a single-agent discrete choice problem with unobserved choice sets.
\begin{example}[Heterogeneous Choice Sets]\label{ex:choice_set}
Let $\mathcal{J} = \{1,\dots,J\}$.  An individual draws a choice set $C\subseteq\mathcal{J}$ and  chooses the alternative $\ey\in C$ that maximizes her utility:
\begin{align*}
\ey=\argmax_{j\in C} \left(\pi(\ex_{j};\theta)+\eu_j\right).
\end{align*}    
Each alternative is characterized by a vector of observable covariates $\ex_j$ and a latent variable. Suppose $|C|\ge \kappa$ with probability 1 for some known $\kappa\ge 2.$ Without further assumptions, \citet[Lemma A.1]{bar:cou:mol:tei21} show that the set of optimal choices is a measurable correspondence:
\begin{align*}
\G(u|x;\theta)=\cup_{K\subseteq\mathcal{J}:|K|=\kappa}\left\{\arg\max_{j \in K}  \left(\pi(x_{j};\theta)+u_j\right)\right\}.
\end{align*} 
\end{example}

The next example is a panel dynamic discrete choice model with short panel data \citep{Honore:2006aa,KHAN2023105515,chesher2024robust}.
\begin{example}[Panel Binary Choice Model]\label{ex:ddc}
Suppose a binary outcome for individual $i$ in period $t$ is generated according to
	\begin{align}
	    Y_{it}&=1\{g(X_{it};\theta)+\gamma Y_{it-1}+A_i+U_{it}\ge 0\},~t=1,\dots,T,
	\end{align}
 for some function $g$ known up to $\theta\in\Theta$. 
The fixed effect $A_i$ is allowed to be correlated with $(X_i,U_i)\equiv(X_{it},U_{it})_{t=1}^T$ arbitrarily. The initial outcome $Y_{i0}$ is unobserved.
Hence, the analyst is does not know how $(A_i, Y_{i0})$ is distributed conditional on $(X_{i},U_{i})$. Let $x=(x_1,\dots,x_T)'$ and $u=(u_1,\dots,u_T)'.$
The model's prediction is as follows.
\begin{multline}
G(u|x;\theta)=\Big\{y=(y_1,\dots,y_T):y_{t}=1\{g(x_{t};\theta)+\gamma {y_{t-1}}+a+u_{t}\ge 0\}, ~t=1,\dots, T, \\
\text{for some }(y_{0},a)\in \{0,1\}\times \mathbb R\Big\}.\label{eq:panel1}
\end{multline}
This model contains empirical contents about an individual's switching decisions from one alternative to the other \citep{chesher2024robust}.\footnote{A static version of this model is closely related to \citeposs{Manski:1987aa} model, in which $g(x_t;\theta)=x_t'\theta$. He showed identification of $\theta$ (up to scale) is possible if the marginal distribution of $U_{it}|X_i$ is time invariant. The model is also related to \citeposs{Chamberlain:2010aa} model, which additionally assumes $A_i$ is independent of $(X_i,U_i)$. }
\end{example}

\subsection{An Overview of the Main Results}\label{ssec:overview}
This section overviews the proposed procedure.
Let $(Y_i,X_i),i=1,\dots, n$ be a sample of outcome and covariates.
Consider testing 
\begin{align}
H_0:\theta\in\Theta_0,~~~\text{v.s.}~~~H_0:\theta\in\Theta_1~,
\end{align}
for subsets $\Theta_0,\Theta_1$ of the parameter space.
The proposed procedure takes the following steps.

\textbf{Step 1:} Split samples into $D_0$ and $D_1$.  Let $\hat\theta_1$ be any estimator of $\theta$ computed from sample $D_1$. 

\textbf{Step 2:} Using $D_0$, construct a tailor-made likelihood function 
\begin{align}
	\cL_0(\theta)&=\prod_{i\in D_0} q_{\theta}(Y_i|X_i),~\theta\in \Theta_0\cup\{\hat\theta_1\},\label{eq:def_cL}
\end{align}
where $q_\theta(y|x)$ is the \emph{LFP-based density} we introduce below.
Let $\hat\theta_0$ be the \emph{restricted maximum likelihood estimator (RMLE)} based on $D_0$:
\begin{align}
	\hat\theta_0\in \argmax_{\theta\in\Theta_0} \cL_0(\theta);\label{eq:theta0}
\end{align}

\textbf{Step 3:} 
Compute the \emph{split-sample likelihood-ratio (LR) statistic}:
\begin{align}
	T_n=\frac{\cL_0(\hat\theta_1)}{\cL_0(\hat\theta_0)}.\label{eq:def_Tn}
\end{align}
The \emph{cross-fit LR statistic} is
\begin{align}
S_n=\frac{T_n+T_n^{\text{swap}}}{2}	,\label{eq:def_Sn}
\end{align}
where $T_n^{\text{swap}}$ is calculated in the same way as $T_n$ after swapping the roles of $D_0$ and $D_1$.

\textbf{Step 4:} Reject $H_0$ if $S_n> 1/\alpha$. Do not reject $H_0$ otherwise. 

One can construct confidence regions for $\theta$ and its functions by defining $\Theta_0$ properly and inverting the test (see Section \ref{sec:main}).

We focus on key properties of the proposed test for now and defer discussion on how to construct $\hat\theta_1$ and $\cL_0$ to the next sections.
The cross-fit LR test controls its size in any finite sample. That is, for any $n$ and the class $\cP^n_0$ of data-generating processes (DGPs) compatible with the null hypothesis, the following statement holds 
\begin{align}
	\sup_{P^n\in \cP^n_0}P^n\big(S_n>\frac{1}{\alpha}\big)\le \alpha.
\end{align}
Our test builds on \cite{Wasserman:2020aa}, who introduced the split-sample and cross-fit LR statistics for probabilistic models. Following them, we call the procedure above \emph{universal inference} (or \emph{universal hypothesis test}). The idea is that the inference applies universally to any model described by \eqref{eq:correspondence} without further regularity conditions.

In the next few sections, we provide details on how to construct the LR statistic.  The key is to construct $\cL_0$ from a \emph{least favorable} parametric model $\{q_\theta,\theta\in\Theta_0\}$ against a density $q_{\hat\theta_1}$ in the unrestricted model.

\subsection{Unrestricted Estimator}\label{ssec:theta1}
Any estimator of $\theta$ based on sample $D_1$ can be used as an unrestricted estimator $\hat\theta_1$. The main role of this estimator is to find a ``representative'' parameter value that fits $D_1$ well. Hence, we do not require $\hat\theta_1$ to be a consistent estimator. Nevertheless, we recommend choosing $\hat\theta_1$ to maximize a measure of fit to $D_1$ for power consideration. 

A natural choice is an extremum estimator that minimizes some sample criterion function $\theta\mapsto\hat {\mathsf Q}_1(\theta)$.\footnote{We use the subscript ``1'' to indicate that the sample criterion function onlyuses $D_1$.} An example of $\hat {\mathsf Q}_1$ is 
\begin{align}
    \hat{\mathsf Q}_1(\theta)=\sup_{j,x}\frac{\{\nu_\theta(A_j|x)-\hat P_1(A_j|x)\}_+}{\hat{s}_{\theta,1}(A_j|x)},\label{eq:cht}
\end{align}
where
$\hat P_1(A_j|x)$ is an estimator of the conditional probability $P(A_j|x)$ using $D_1$, and $\hat{s}_{\theta,1}$ is an estimator of the standard error of $\hat P_1$.\footnote{These criterion functions are commonly used in practice \citep{chernozhukov2007estimation,chernozhukov_lee_rosen13}. The supremum operation over $j$ (or $x$) can be replaced with sum (or integral). See \cite{Andrews:2013aa,chernozhukov_lee_rosen13}.}  Another possibility is to use
the (negative) log-likelihood function:
\begin{align}
    \hat{\mathsf Q}_1(\theta)=\sum_{i\in D_1}-\ln p_\theta(Y_i|X_i;\hat p_n).\label{eq:klic_proj}
\end{align}
It uses a log-likelihood $p_\theta(\cdot|\cdot;\hat p_n)$, which is the Kullback-Leibler divergence projection of a nonparametric estimator of the conditional choice probability to $\fq{\theta,x}$ \citep{Kaido:2023aa}. Our simulation results suggest both estimators perform similarly.

Next, we find a positive density $p(\cdot|x)$ compatible with the unrestricted estimator, by solving the following linear feasibility problem: 
\begin{align}
	\text{Find }&p(\cdot|x)\in \Delta^{\cY}\label{eq:defp}\\
	s.t.~&\sum_{y\in A}p(y|x)\ge \nu_{\hat\theta_1}(A|x),~ A\in \cC.\notag
\end{align}
Any solution can be used as long as $p(y|x)>0$ for all $y\in\cY$.
We then set $q_{\hat\theta_1}=p$. One may view $q_{\hat\theta_1}$ as a representative density in the unrestricted model.

\begin{remark}
We use $\hat\theta_1$ to compute the unrestricted density $q_{\hat\theta_1}$ in order to attain nontrivial power. When $H_1$ is true, $\cL_0(\hat\theta_1)$ is expected to be away from the maximal value of $\cL_0(\cdot)$ over $\Theta_0$, reflecting the classic idea behind LR tests. This is not the only way to construct an unrestricted density. 
Another approach is to use the idea of weighted average power \citep{Andrews:1994aa,Elliott:2015aa}. The companion paper takes this approach with a full-sample likelihood. This approach requires specifying a prior (over $\Theta$) and calculating an integral of the containment functional. To keep the method simple, this paper does not take this route.
\end{remark}

\subsection{How to Construct $\cL_0$}\label{ssec:cL0}
Below, we condition on $D_1$, assume $q_{\hat\theta_1}$ is already computed, and treat it as fixed. 
Given $q_{\hat\theta_1}$, consider testing whether the data in $D_0$ are compatible with the null hypothesis or $q_{\hat\theta_1}$. If there is a parametric model $\{q_\theta,\theta\in\Theta_0\}$ such that $q_\theta\in\fq{\theta,x}, \theta\in\Theta_0$,
it is natural to base our test on:
\begin{align}
 \frac{\prod_{i\in D_0}q_{\hat\theta_1}(Y_i|X_i)}{\sup_{\theta\in\Theta_0} \prod_{i\in D_0}q_\theta(Y_i|X_i)}.\label{eq:heuristic_LR}
\end{align} 
However, naively selecting a parametric model can be problematic.
For each $\theta$, some density in $\fq{\theta,x}$ may be easier to distinguish from $q_{\hat\theta_1}$ than others. Using such densities to compute the denominator of \eqref{eq:heuristic_LR} could cause over-rejection if the true density is close to $q_{\hat\theta_1}$.

To address this issue, we define the \emph{least favorable pair (LFP)-based parametric model}. 
For this, define the Kullback-Leibler (KL) divergence by
\begin{align}
	I(f(\cdot|x)||f'(\cdot|x))\equiv \int_{S_x}\ln \frac{f(y|x)}{f'(y|x)}f(y|x)d\mu,
\end{align}
where $S_x=\{y\in\cY:f(y|x)>0\}$.

\begin{definition}[LFP-based parametric model]\label{def:lfp_model}
A family of densities $\{q_\theta,\theta\in\Theta_0\cup\{\hat\theta_1\}\}$ is a \emph{least favorable pair (LFP)-based parametric model} if (i) for each $\theta\in \Theta_0$ and $x\in\cX$,
\begin{align}
	q_\theta(\cdot|x)=\argmin_{q(\cdot|x)\in \fq{\theta,x}}&~I(q(\cdot|x)+p(\cdot|x)||q(\cdot|x))~,\label{eq:klsol}
\end{align}
for $p(\cdot|x)\in \fq{\hat\theta_1,x}$; and (ii) $q_{\hat\theta_1}(\cdot|x)=p(\cdot|x)$.
\end{definition}
We call $q_\theta$ the \emph{LFP-based density}.
The theory of minimax tests ensures that $q_\theta(\cdot|x)$ is the least-favorable density for distinguishing $\fq{\theta,x}$ from $q_{\hat\theta_1}(\cdot|x)$ (see Proposition \ref{prop:cs}), suitable for constructing an LR statistic and achieving size control. A practical aspect of this construction is that one can compute $q_\theta(\cdot|x)$ by solving the following convex program:
\begin{align}
q_\theta(\cdot|x)=\argmin_{q(\cdot|x)\in \Delta^\cY}&~\sum_{y\in \cY}\ln\Big(\frac{q(y|x)+p(y|x)}{q(y|x)}\Big)(q(y|x)+p(y|x)),\label{eq:def_qtheta}\\
	s.t.&~\sum_{y\in A}q(y|x)\ge  \nu_{\theta}(A|x),~A\in \cC, x\in X.\notag
\end{align}
The program \eqref{eq:def_qtheta} imposes the sharp identifying restrictions as constraints. As such, they retain all restrictions from the model. 
We show that $q_\theta$ can be derived analytically in leading examples.
A companion Python program can also be used to compute $q_\theta$ numerically.
\begin{remark}\label{rem:cardinality}
The convex program \eqref{eq:def_qtheta} (or \eqref{eq:defp}) is computationally tractable as long as the number of inequalities is moderate. This is the case when the cardinality of $\cY$ is not too high, e.g., games with a small number of players or dynamic models with short panel data. In richer models, several strategies are recommended. One approach is to reduce the number of inequalities to a manageable size without losing information. The recent work by \cite{luo_ponomarev_wan} demonstrates this can be done for various models and provides a graph-based algorithm to obtain such inequalities (called the smallest core-determining class (CDC)).\footnote{They also show that the smallest CDC only depends on the support of $G(U|X;\theta)$. Hence, as long as the support remains the same in a given model, the smallest CDC needs to be computed only once.} The second approach is to add a restriction to the model. For example, in discrete games, grouping players into several types helps reduce the number of inequalities \citep{Berry_Tamer_2006}. Finally, another possibility is to use a non-sharp subset of inequalities (see Remark \ref{rem:approximation}).  
\end{remark}

Now we define our tailor-made likelihood function and the restricted maximum likelihood estimator as follows:\footnote{To compute $T_n$, we only need $q_\theta$ to be defined over $\Theta_0\cup \{\hat\theta_1\}$, which is a non-random domain conditional on $D_1$. The LFP-based parametric model is defined similarly for $T_n^{\text{swap}}.$}
\begin{align}
\cL_0(\theta)\equiv \prod_{i\in D_0}q_\theta(Y_i|X_i), 	~\theta\in\Theta_0\cup \{\hat\theta_1\},\qquad \hat\theta_0\in \argmax_{\theta\in \Theta_0}\cL_0(\theta).
\end{align}

\section{Finite Sample Validity}\label{sec:main}

We make the following assumption on sampling. 
\begin{assumption}\label{as:iidF}
(i) $(Y_i,X_i,U_i),i=1,\dots,n$ are independently distributed across $i$; (ii) $(X_i,U_i),i=1,\dots,n$ are identically distributed.
\end{assumption}
We assume $(X_i,U_i)$ are independently and identically distributed across $i$.\footnote{Allowing heterogeneity of $F_{U|X}$ across $i$ is straightforward but requires additional notation.} 
We also assume $Y_i$ is independently distributed but allow its law to be heterogeneous across $i$.  Allowing $Y_i$'s (cross-sectional) dependence is also possible at the expense of modifying $\hat\theta_1$. We discuss this extension in Appendix \ref{sec:dependence}.

For each $\theta\in\Theta$, let
\begin{align}
	 \mathcal P^n_\theta\equiv\Big\{P^n\in\Delta(\cS^n):P^n=\bigotimes_{i=1}^n P_i, P_i(\cdot|x)\in\cP_{\theta,x},~	\forall i,~ x\in\cX \Big\}.\label{eq:cPtheta}
\end{align}
Let $\cP^n_0\equiv\{P^n\in\cP^n_\theta:\theta\in\Theta_0\}$ be the set of data generating processes (DGPs) compatible with $H_0$. The following theorem establishes the universal validity of the proposed test and its robustness.
\begin{theorem}\label{thm:univ_inf}
Suppose Assumption \ref{as:iidF} holds. Then, for any $n\in\mathbb N$,
\begin{align}
	\sup_{P^n\in \cP^n_0}P^n\big(S_n>\frac{1}{\alpha}\big)\le \alpha.
\end{align}
\end{theorem}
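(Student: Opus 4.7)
The plan is to reduce the finite-sample size bound to two inequalities following the Chernoff-style strategy of \cite{Wasserman:2020aa}: Markov's inequality, combined with the universal expectation bound $E^{P^n}[S_n] \le 1$ under the null. Since $S_n \ge 0$, Markov's inequality gives
$$P^n(S_n > 1/\alpha) \le \alpha E^{P^n}[S_n],$$
and since $S_n = (T_n + T_n^{\text{swap}})/2$ is a symmetric average, it suffices to prove $E^{P^n}[T_n] \le 1$ for every $P^n \in \cP^n_0$; the corresponding bound for $T_n^{\text{swap}}$ follows by the same argument after interchanging the roles of $D_0$ and $D_1$.

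The central step is a conditioning argument on $D_1$. Given $D_1$, the estimator $\hat\theta_1$, the density $q_{\hat\theta_1}$, and the entire LFP-based family $\{q_\theta:\theta\in\Theta_0\}$ are deterministic, so $\cL_0$ acts as a genuine parametric likelihood on the disjoint sample $D_0$. Under $H_0$ there exists $\theta^*\in\Theta_0$ with $P^n\in\cP^n_{\theta^*}$, and since $\hat\theta_0$ maximizes $\cL_0(\theta)$ over $\Theta_0$, we have $\cL_0(\hat\theta_0)\ge \cL_0(\theta^*)$. Substituting into \eqref{eq:def_Tn} yields the almost-sure bound
$$T_n \le \prod_{i\in D_0} \frac{q_{\hat\theta_1}(Y_i|X_i)}{q_{\theta^*}(Y_i|X_i)}.$$
Under Assumption \ref{as:iidF}(i), $\{(Y_i,X_i): i\in D_0\}$ is independent of $D_1$ and its entries are mutually independent, so the conditional expectation $E^{P^n}[T_n\mid D_1]$ factors into a product of one-observation integrals.

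To close the argument, I would evaluate each one-observation expectation by further conditioning on $X_i$. Under $P^n\in\cP^n_{\theta^*}$, the definition \eqref{eq:cPtheta} implies that the conditional density $p_i(\cdot|X_i)$ of $Y_i$ belongs to $\fq{\theta^*,X_i}$, so the inner expectation equals
$$\sum_{y\in\cY} p_i(y|X_i)\,\frac{q_{\hat\theta_1}(y|X_i)}{q_{\theta^*}(y|X_i)}.$$
The main obstacle, and the point where the LFP construction pays off, is showing this sum is bounded by $1$ for every $p_i(\cdot|X_i)\in\fq{\theta^*,X_i}$. This is the least-favorable property of $q_{\theta^*}$ relative to $q_{\hat\theta_1}$: it follows from the KKT conditions of the convex program \eqref{eq:def_qtheta} defining the LFP-based density on the core inequalities, and is invoked here via Proposition \ref{prop:cs}. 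Granted this inequality, each factor is at most one; iterating the tower property yields $E^{P^n}[T_n]\le 1$, the analogous bound $E^{P^n}[T_n^{\text{swap}}]\le 1$ holds by symmetry, and Markov's inequality delivers the uniform finite-sample size control over $\cP^n_0$.
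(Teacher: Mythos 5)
Your proposal is correct and follows essentially the same route as the paper's proof: Markov's inequality plus linearity of $S_n=(T_n+T_n^{\text{swap}})/2$ reduces the claim to $E_{P^n}[T_n]\le 1$, conditioning on $D_1$ and the RMLE property give $T_n\le\prod_{i\in D_0}q_{\hat\theta_1}(Y_i|X_i)/q_{\theta^*}(Y_i|X_i)$, and the one-observation bound $\sup_{p\in\fq{\theta^*,x}}\sum_{y}p(y|x)\,q_{\hat\theta_1}(y|x)/q_{\theta^*}(y|x)\le 1$ is exactly what the paper extracts from Proposition \ref{prop:cs}. The only cosmetic differences are that the paper derives this last bound not from the KKT conditions directly but by writing the supremum as a Choquet integral against the conjugate capacity $\nu^*_\theta$ and invoking the Huber--Strassen dominance $\nu^*_\theta(\Lambda_i>t)=Q_\theta(\Lambda_i>t)$ together with $\int_A q_{\hat\theta_1}\,d\mu\le 1$, and that it separately notes the degenerate case $\cP_\theta\cap\cP_{\hat\theta_1}\ne\emptyset$ (where one takes $Q_\theta=Q_{\hat\theta_1}$).
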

The theorem ensures that the test's finite-sample size is at most $\alpha$ under any distribution $P^n$ compatible with the null hypothesis regardless of how selections operate across $i$.
The result does not require any regularity conditions beyond Assumption \ref{as:iidF}. The proof is in Appendix \ref{ssec:proofs}. It uses the properties of the least-favorable pair and Markov's inequality interpreted as a Chernoff-style bound for log likelihood-ratio as in \cite{Wasserman:2020aa}. For our result, the construction of the null model $\{q_\theta,\theta\in\Theta_0\}$ is the key, which is possible because  $\fq{\theta,x}$ is characterized by a containment functional. We elaborate on this theoretical background in Section \ref{sec:background}. 

Let us compare our test with the standard LR test. They are based on the idea of comparing the likelihood values with and without restrictions imposed by $H_0$.  The standard LR test's rejection rule is $2\ln\big(\cL_{full}(\hat\theta_1)/\cL_{full}(\hat\theta_0)\big)> c_{df,\alpha}$, where $\cL_{full}$ is the full-sample likelihood, and the critical value $c_{df,\alpha}$ is the $1-\alpha$ quantile of a $\chi^2$-distribution, which increases as the degrees of freedom ($df$) increases. In contrast, a split-sample version of the proposed test uses $2\ln\left(\frac{\cL_0(\hat\theta_1)}{\cL_0(\hat\theta_0)}\right)> c_\alpha$ with $c_\alpha=2\ln(1/\alpha)$. This critical value does not increase with the degrees of freedom. This difference arises because we use the out-of-sample likelihood $\cL_0(\hat\theta_1)$ whose behavior differs from the in-sample likelihood $\cL_{full}(\hat\theta_1)$.\footnote{See \cite{Wasserman:2020aa} for further discussion on this point.}

Confidence regions for functions of $\theta$ can be constructed in a simple manner.  Let $\varphi:\Theta\to \mathbb R^{d_\varphi}$. Examples of $\varphi(\theta)$ are subvectors of $\theta$ and counterfactual objects as we discuss below using examples.\footnote{One can also construct a confidence region for the entire parameter by letting $\varphi$ be the identity map.}
For each $\varphi^*\in\mathbb R^{d_\varphi}$, let $\Theta_0(\varphi^*)\equiv\{\theta\in\Theta:\varphi(\theta)=\varphi^*\}$ and $\Theta_1(\varphi^*)\equiv\{\theta\in\Theta:\varphi(\theta)\ne\varphi^*\}$. 
Let $T_n(\varphi^*)$ be defined as in \eqref{eq:def_Tn} with $\Theta_0=\Theta_0(\varphi^*)$, and let $S_n(\varphi^*)$ be defined similarly. Define
\begin{align}
    CS_n\equiv\big\{\varphi^*\in\mathbb R^{d_\varphi}:S_n(\varphi^*)\le \frac{1}{\alpha}\big\}.\label{eq:ci}
\end{align}
For each $P^n$, let $\mathcal H_{P^n}[\varphi]\equiv\{\varphi^*:\varphi(\theta)=\varphi^*, P^n\in\cP^n_\theta,\text{ for some }\theta\in\Theta\}$ be the \emph{sharp identification region} for $\varphi(\theta)$. Let $\mathcal F^n\equiv \big\{(\varphi^*,P^n):\varphi(\theta)=\varphi^*,~P^n\in\cP^n_\theta,\text{ for some }\theta\in\Theta\big\}$ be the collection of data generating processes compatible with the model restrictions.
Then, the following result holds.
\begin{corollary}\label{cor:univ_cs}
Suppose Assumption \ref{as:iidF} holds. Then, for any $n\in\mathbb N$,
\begin{align}
	\inf_{(\varphi^*,P^n)\in \mathcal F^n}P^n\big(\varphi^*\in CS_n\big)\ge 1-\alpha.
\end{align}
\end{corollary}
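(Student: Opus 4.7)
The plan is a standard test-inversion argument that reduces the coverage statement to the size bound established in Theorem \ref{thm:univ_inf}. The key observation is that for each fixed $\varphi^*\in\mathbb R^{d_\varphi}$, the event $\{\varphi^*\in CS_n\}$ coincides with the non-rejection event $\{S_n(\varphi^*)\le 1/\alpha\}$ of the cross-fit LR test of $H_0:\theta\in\Theta_0(\varphi^*)$ constructed exactly as in Section \ref{ssec:overview}, with the null parameter set $\Theta_0$ replaced by $\Theta_0(\varphi^*)$.

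First I would fix an arbitrary pair $(\varphi^*,P^n)\in\mathcal F^n$. By the definition of $\mathcal F^n$, there exists some $\theta^*\in\Theta$ with $\varphi(\theta^*)=\varphi^*$ and $P^n\in\cP^n_{\theta^*}$. Since $\varphi(\theta^*)=\varphi^*$, we have $\theta^*\in\Theta_0(\varphi^*)$, so $P^n$ belongs to the null class $\{P^n\in\cP^n_\theta:\theta\in\Theta_0(\varphi^*)\}$ associated with the test statistic $S_n(\varphi^*)$.

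Next I would apply Theorem \ref{thm:univ_inf} with the null set $\Theta_0(\varphi^*)$ in place of $\Theta_0$. Assumption \ref{as:iidF} continues to hold, and the construction of the LFP-based likelihood $\cL_0$, the RMLE $\hat\theta_0$, and the statistics $T_n$, $T_n^{\text{swap}}$, $S_n(\varphi^*)$ is precisely the procedure in Section \ref{ssec:overview} applied to $\Theta_0(\varphi^*)$. Theorem \ref{thm:univ_inf} then delivers $P^n(S_n(\varphi^*)>1/\alpha)\le\alpha$, and by the definition of $CS_n$ in \eqref{eq:ci} this is equivalent to $P^n(\varphi^*\in CS_n)\ge 1-\alpha$. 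Taking the infimum over $(\varphi^*,P^n)\in\mathcal F^n$ yields the claim.

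The argument is essentially a mechanical consequence of Theorem \ref{thm:univ_inf}, so there is no genuine obstacle. The only point worth checking carefully is that the inputs to the test — the unrestricted estimator $\hat\theta_1$ computed from $D_1$ via \eqref{eq:cht} or \eqref{eq:klic_proj}, and the LFP-based density $q_\theta$ defined by \eqref{eq:def_qtheta} — remain well defined when the null parameter set is $\Theta_0(\varphi^*)$. This is automatic because $\hat\theta_1$ optimizes over the full $\Theta$ (not $\Theta_0$), and $q_\theta$ is defined pointwise in $\theta$, so no regularity or shape assumptions on $\Theta_0(\varphi^*)$ are needed.
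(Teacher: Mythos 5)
Your proposal is correct and is essentially identical to the paper's own proof: both are the standard test-inversion argument that identifies $P^n$ as a member of the null class for $\Theta_0(\varphi^*)$ and then invokes the size bound of Theorem \ref{thm:univ_inf} to get $P^n(\varphi^*\notin CS_n)=P^n(S_n(\varphi^*)>1/\alpha)\le\alpha$. Your additional check that the test's inputs remain well defined for an arbitrary null set $\Theta_0(\varphi^*)$ is a reasonable elaboration but does not change the argument.
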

The coverage statement can also be written as $\inf_{P^n\in \mathcal P^n_\theta,\theta\in\Theta}\inf_{\varphi^*\in \mathcal H_{P^n}[\varphi]}P^n(\varphi^*\in CS_n)\ge 1-\alpha$. Therefore, the corollary ensures that $CS_n$ covers elements of the sharp identification region uniformly across $P^n$. 

\begin{remark}
\cite{Wasserman:2020aa} (Theorem 5) demonstrate that, in regular models with point identified $\theta$\footnote{To be precise, they assume identifiability of $\theta$ and the differentiability of the statistical model in quadratic mean (DQM), together with availability of a $\sqrt n$-consistent estimator of $\theta$.}, their confidence set has diameter $O_p(\sqrt{\log(1/\alpha)/n})$, the same order as the optimal confidence set. In incomplete models, $\theta$ is typically partially identified. As such, it is not immediately clear whether such a result can be extended to the current setting. To keep a tight focus, we leave the analysis of (excess) diameters of $CS_n$ for future work.
\end{remark}

\subsection{Illustration}\label{sec:examples}
We illustrate the main result by revisiting Example \ref{ex:game1}.

\setcounter{example}{0}
\begin{example}[Discrete Game]
By \eqref{eq:defnu}-\eqref{eq:artstein}, \eqref{eq:deffq}, and \eqref{eq:strategic_sub}, 
the set of densities compatible with $\theta$ is
\begin{multline}
	\fq{\theta,x}=\Big\{q\in\Delta:~q((0,0)|x)=\f(S_{\{(0,0)\}|x;\theta}|x),~q((1,1)|x)=\f(S_{\{(1,1)\}|x;\theta}|x),\\
	  \f(S_{\{(1,0)\}|x;\theta}|x)\le q((1,0)|x)\leq \f(S_{\{(1,0)\}|x;\theta})+\f(S_{\{(0,1),(1,0)\}|x;\theta}|x)\Big\}.\label{eq:frak_q_CT}
\end{multline}
One can obtain $q_\theta$ in closed form by solving \eqref{eq:def_qtheta}. For this, let $\eta_1(x;\theta)\equiv 1-\f(S_{\{(0,0)\}|x;\theta}|x)-\f(S_{\{(1,1)\}|x;\theta}|x),
\eta_2(x;\theta)\equiv \f(S_{\{(1,0)\}|x;\theta}|x)+\f(S_{\{(0,1),(1,0)\};\theta}|x),$ and $
\eta_3(x;\theta)\equiv\f(S_{\{(1,0)\}|x;\theta}|x).$  $\eta_1$ is the probability allocated, under $\theta$, to either $(1,0)$ or $(0,1)$. Similarly, $\eta_2$ ($\eta_3$) is the upper (lower) bound, under $\theta$, on the probability of $(1,0)$ being the equilibrium outcome of the game.

The following proposition characterizes $q_\theta$.

 \begin{proposition}\label{prop:game_lfp}
Let $p\in \fq{\hat\theta_1}$, and let $p_{rel}((1,0)|x)\equiv\tfrac{p((1,0)|x)}{p((1,0)|x)+p((0,1)|x)}$. 
For any $\theta\in\Theta_0$, the LFP-based density $q_\theta$ has the following closed-form:
\begin{align}
\qs{\theta}((0,0)|x)&=\f(S_{\{(0,0)\}|x;\theta}|x)\label{eq:eg_prof_likelihood1}\\
\qs{\theta}((1,1)|x)&=\f(S_{\{(1,1)\}|x;\theta}|x)\label{eq:eg_prof_likelihood2}\\
\qs{\theta}((1,0)|x)&=
p_{rel}((1,0)|x)\eta_1(x;\theta)\I_1(x;\theta,p)\notag\\
&\qquad\qquad\qquad+\eta_2(x;\theta)\I_2(x;\theta,p)+ 
\eta_3(x;\theta)\I_3(x;\theta,p),
\label{eq:eg_prof_likelihood4}
\end{align}
where 
\begin{align}
	\I_1(x;\theta,p)&\equiv1\Big\{\eta_3(x;\theta)\le p_{rel}((1,0)|x)\eta_1(x;\theta)\le \eta_2(x;\theta)\Big\}\label{eq:def_Theta1}\\
	\I_2(x;\theta,p)&\equiv 1\Big\{	p_{rel}((1,0)|x)\eta_1(x;\theta)> \eta_2(x;\theta)\Big\}\label{eq:def_Theta2}\\
	\I_3(x;\theta,p)&\equiv 1\Big\{p_{rel}((1,0)|x)\eta_1(x;\theta)< \eta_3(x;\theta)\Big\}.\label{eq:def_Theta3}
\end{align}
\end{proposition}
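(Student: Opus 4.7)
The plan is to reduce the convex program \eqref{eq:def_qtheta} to a one-dimensional optimization over the $(1,0)$-probability and solve it in closed form via the first-order condition. From the description of $\fq{\theta,x}$ in \eqref{eq:frak_q_CT}, the masses at $(0,0)$ and $(1,1)$ are uniquely pinned down by the sharp restrictions, yielding \eqref{eq:eg_prof_likelihood1}--\eqref{eq:eg_prof_likelihood2} immediately. The only remaining degree of freedom is $z \equiv q((1,0)|x)$, which must lie in $[\eta_3(x;\theta),\eta_2(x;\theta)]$; the complementary mass is $q((0,1)|x) = \eta_1(x;\theta) - z$. Substituting into the objective of \eqref{eq:def_qtheta} and discarding terms constant in $z$, the problem reduces to minimizing
\begin{align*}
\phi(z) = (z+p_{10})\ln\!\frac{z+p_{10}}{z} + (\eta_1 - z + p_{01})\ln\!\frac{\eta_1 - z + p_{01}}{\eta_1 - z}
\end{align*}
over $z \in [\eta_3(x;\theta),\eta_2(x;\theta)]$, where $p_{10}\equiv p((1,0)|x)$, $p_{01}\equiv p((0,1)|x)$, and $\eta_1\equiv \eta_1(x;\theta)$.

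Next I would verify strict convexity: a short computation shows that $g(z;a) \equiv (z+a)\ln((z+a)/z)$ satisfies $g''(z;a) = a^2/[z^2(z+a)] > 0$ for $a,z>0$, so both summands of $\phi$ are strictly convex on $(0,\eta_1)$, guaranteeing a unique constrained minimizer. Differentiating yields the first-order condition
\begin{align*}
\ln\!\frac{z+p_{10}}{z} - \frac{p_{10}}{z} = \ln\!\frac{\eta_1 - z + p_{01}}{\eta_1 - z} - \frac{p_{01}}{\eta_1 - z}.
\end{align*}
The key algebraic observation---the only step that is not mechanical---is that the candidate $z^{\ast} = p_{rel}((1,0)|x)\,\eta_1(x;\theta)$ makes both $p_{10}/z^{\ast}$ and $p_{01}/(\eta_1 - z^{\ast})$ equal to the common value $(p_{10}+p_{01})/\eta_1$, so the log arguments on each side coincide as well and the FOC holds identically. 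Note that this does not require $p_{10}+p_{01} = \eta_1(x;\theta)$; the identity $z^{\ast}/p_{10} = (\eta_1 - z^{\ast})/p_{01}$ is precisely the defining property of $p_{rel}$. By strict convexity, $z^{\ast}$ is the unique unconstrained minimizer.

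Finally, I would project $z^{\ast}$ onto $[\eta_3(x;\theta),\eta_2(x;\theta)]$ by standard convexity arguments. If $z^{\ast}$ already lies in the interval (the event defining $\I_1$), it is the constrained minimizer, contributing the first term in \eqref{eq:eg_prof_likelihood4}. If $z^{\ast} > \eta_2$ (event $\I_2$), strict convexity makes $\phi$ decreasing on $[\eta_3,\eta_2] \subset (0,z^{\ast})$, so the feasible minimum is attained at the upper endpoint $z=\eta_2$; a symmetric argument for $z^{\ast} < \eta_3$ (event $\I_3$) gives $z=\eta_3$. Assembling the three disjoint cases reproduces \eqref{eq:eg_prof_likelihood4}. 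The main obstacle is not the calculation per se but \emph{spotting} the closed-form candidate $z^{\ast}$; once its form is written down---which is natural from the ``relative weight'' interpretation of $p_{rel}$---verifying the FOC and carrying out the boundary projection are short and routine.
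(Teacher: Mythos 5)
Your proposal is correct and follows essentially the same route as the paper's proof: fix the $(0,0)$ and $(1,1)$ masses by the equality constraints, reduce \eqref{eq:def_qtheta} to a one-dimensional convex program in $z=q((1,0)|x)$ over $[\eta_3(x;\theta),\eta_2(x;\theta)]$, and solve the first-order/KKT conditions, with the interior solution $z^{\ast}=p_{rel}((1,0)|x)\eta_1(x;\theta)$ projected to the endpoints in the corner cases. You supply more explicit detail than the paper (which simply states that solving the KKT conditions yields \eqref{eq:eg_prof_likelihood4} and invokes Slater's condition, also noting the degenerate complete-model case $\f(S_{\{(0,1),(1,0)\}|x;\theta}|x)=0$), and your direct convexity-plus-monotonicity argument for the boundary cases is a clean substitute for the constraint-qualification step.
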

In \eqref{eq:eg_prof_likelihood4}, $q_\theta$'s form depends on the relative frequency $p_{rel}((1,0)|x)$ of outcome (1,0).
If $p_{rel}((1,0)|x)$ is in the interval $[\eta_3(x;\theta)/\eta_1(x;\theta),\eta_2(x;\theta)/\eta_1(x;\theta)]$ for $\theta\in\Theta_0$, $q_\theta((1,0)|x)$ is proportional to $p_{rel}((1,0)|x)$. If the relative frequency is outside the interval, $q_\theta((1,0)|x)$ equals either of the end points of the interval (i.e., $\eta_2$ or $\eta_3$). 
\end{example}

One can use Proposition \ref{prop:game_lfp} to test various hypotheses on $\theta$. The presence of strategic interaction effects can be examined by testing
	\begin{align}
	H_0:\beta^{(j)}=0,~j=1,2~~~v.s.~~~H_1:\beta^{(j)}<0,~\text{ for some } j.
	\end{align}
For this hypothesis, we let $\Theta_0=\{\theta=(\beta',\delta')':\beta=0\}.$\footnote{Further simplification of Step 2 (in Section \ref{ssec:overview}) is possible for this null hypothesis because $\eta_2(\theta;x)=\eta_3(\theta;x)$ if $\beta=0$ due to the completeness of the model under $H_0$. Hence, the LFP-based likelihood satisfies $\qs{\theta}(y|x)=\f(S_{\{y\}|x;\theta}|x)$ for all $y$.}
One can also construct confidence intervals for counterfactual probabilities.
Let 
\begin{align}
Y^{(j)}(x^{(j)},y^{(-j)})\equiv 1\{x^{(j)}{}{'}\delta^{(j)}+\beta^{(j)}y^{(-j)}+U^{(j)}\ge 0\}
\end{align}
be the \emph{potential entry decision} by player $j$ when the covariates and the other player's action are set to $(x^{(j)},y^{(-j)})$. The counterfactual entry probability is
\begin{align}
    \varphi(\theta)= P(Y^{(j)}(x^{(j)},y^{(-j)})=1)=F_\theta(\{u:x^{(j)}{}{'}\delta^{(j)}+\beta^{(j)}y^{(-j)}\ge-u^{(j)} \}),
\end{align}
where $F_\theta$ is the marginal distribution of $U$.
Letting $\Theta_0(\varphi^*)=\{\theta\in\Theta:\varphi(\theta)=\varphi^*\}$,
one can construct a confidence interval as in \eqref{eq:ci}.
 Corollary \ref{cor:univ_cs} ensures the finite sample coverage of this interval.\footnote{Alternatively, one may be interested in a \emph{counterfactual equilibrium outcome} $Y(x)$, which would realize when we set $X$ to $x$. Due to the incompleteness, the counterfactual equilibrium probability $P(Y(x)=y)$ is not unique, but its bounds are functions of $\theta$. For example, the sharp upper bound on $P(Y(x)=y)$ is $\varphi(\theta)
    =\f(G(U|x;\theta)\cap \{(1,0)\}\ne\emptyset)=\f(S_{\{(1,0)\}|x;\theta})+\f(S_{\{(0,1),(1,0)\}|x;\theta}).$} Various other functionals can also be handled. In related work, \cite{han2024setvalued} apply this paper's procedure to construct confidence intervals for average structural functions, average treatment effects, and the counterfactual probability of an individual's switching behavior.

\section{Theory behind the Robustness and Finite-sample Validity}\label{sec:background}
This section outlines the machinery behind Theorem \ref{thm:univ_inf} and how it is related to existing results. 

\subsection{Universal inference with Least Favorable Pair}
\cite{Wasserman:2020aa} consider a probabilistic model $\{P_\theta,\theta\in\Theta\}$, where each $P_\theta$ is a probability distribution over a measurable space $\mathcal Z$. Given a sample $Z_i\in \mathcal Z,i=1,\dots, n$, they construct split and cross-fit LR statistics with $\cL_0(\theta)=\prod_{i\in D_0}p_{\theta}(Z_i)$, where $p_\theta=dP_\theta/d\mu$ and show the their universal validity.\footnote{To be precise, they apply Markov's inequality to $T_n$, which can be viewed as bounding the tail probability using the moment generating function (MGF) of the log-likelihood. The key is to use a simple yet nontrivial result $E_{P_\theta}[e^{t\ln T_n}]\le 1$ for $t=1$. They call this approach ``poor man's Chernoff bound''. See their discussion on page 16882.} In their setting, the likelihood $\cL_0(\cdot)$ represents the conditional distribution of the observations $\{Z_i,i\in D_0\}$ over subsample $D_0$, given $D_1$. In our setting, the form of this distribution is unknown. Hence, instead of applying their argument directly, we construct $\cL_0$ using the theory of minimax tests.\footnote{See \cite{Lehmann:2006aa} (Ch. 8) for a general treatment of the topic.} 
 
 Let $P$ be the joint distribution of $Z=(Y,X)\in\cS$, $P_{Y|X}$ be the conditional distribution of $Y|X$, and $P_X$ be the marginal distribution of $X$.
For each $\theta\in\Theta$, the set of permissible distributions of $Z$
 is $\mathcal{P}_{\theta}=\{P\in \Delta(\cS):P_{Y|X}(\cdot|x)\in \cP_{\theta,x},x\in \cX\}$.

 For $\theta_0, \theta_1 \in \Theta$ such that $\mathcal P_{\theta_0}$ and $\mathcal P_{\theta_1}$ are disjoint, consider testing $H_0: \theta=\theta_0$, against $H_1:\theta=\theta_1$. 
For any test $\phi: \cS \mapsto [0,1]$, its rejection probability under $P$ is $E_{P}[\phi(Z)]=\int \phi(z)dP.$
Let $\pi_{\theta}(\phi)\equiv\inf_{P \in \mathcal{P}_{\theta}}E_P [\phi(Z)]$ be the \emph{power guarantee} of $\phi$ at $\theta$. This is the power value certain to be obtained regardless of the unknown selection mechanism. A test $\phi$ is a \emph{level-$\alpha$ minimax test} if it satisfies the following conditions:
\begin{equation}
	\sup_{P \in \mathcal{P}_{\theta_0}}E_P[\phi(Z)]\le \alpha~,
\label{eq:size} \end{equation}
and
\begin{equation}
	\pi_{\theta_1}(\phi)\ge \pi_{\theta_1}(\tilde\phi),~\forall \tilde \phi \text{ satisfying }\eqref{eq:size}.
\label{eq:power} \end{equation}
That is, the test maximizes the power guarantee subject to the uniform size control constraint (over $\mathcal P_{\theta_0}$).

Define the lower envelope of $\cP_\theta$ by
\begin{align}
 \nu_\theta(A)\equiv\inf_{P \in \mathcal{P}_\theta}P(A),~A\in \cC.
\label{eq:defnutheta} \end{align}
Define the correspondence
$\Gamma(x,u;\theta)\equiv\{(y,x)\in\cS: y\in G(u|x;\theta)\}$.
By Choquet's theorem \citep[e.g.,][]{Choquet1954,philippe1999decision,Molchanov:2006aa}, we may represent $\nu_\theta$ using the distribution of the random set $\Gamma(X,U;\theta)$:
\begin{align*}
    \nu_\theta(A)&=\int_\cX\int_\cU1\{\Gamma(x,u;\theta)\subset A\}dF_\theta(u|x)dP_X(x).
\end{align*}
The set function $\nu_\theta$ belongs to a class of two-monotone capacities whose properties have proven powerful for conducting robust inference \citep{Huber:1981aa}.\footnote{See Appendix \ref{sec:capacities}. $\nu_\theta$ is called a \emph{belief function} or a totally monotone capacity. The total monotonicity of $\nu_\theta$ follows from \cite{philippe1999decision} (Theorem 3). The foundations of belief functions are given by \cite{dempster1967upper} and \cite{shafer1982belief}. See \cite{Gul:2014aa} and \cite{epstein2015exchangeable} for the axiomatic foundations of the use of belief functions in incomplete models.} For this class, the rejection region of a minimax test takes the form $\{z:\Lambda(z)>t\}$ for a measurable function $\Lambda$. Furthermore, we may obtain the following result \citep[see][]{Huber:1973aa,kaido2019robust}. 

\begin{proposition}\label{prop:cs}	
Suppose $\cP_{\theta_0}\cap \cP_{\theta_1}=\emptyset$.
	Then, (i) there exists a \emph{least-favorable pair (LFP)} $(Q_{0},Q_{1})\in\mathcal P_{\theta_0}\times\mathcal P_{\theta_1}$ of distributions such that for all $t\in\mathbb R$,
	\begin{align}
	\sup_{P \in \mathcal{P}_{\theta_0}}P(\Lambda>t)&=Q_{0}(\Lambda>t)\\
	\inf_{P \in \mathcal{P}_{\theta_1}}P(\Lambda>t)&=Q_{1}(\Lambda>t),
	\end{align}
where $\Lambda$ is a version of $dQ_1/dQ_0.$

\noindent
(ii) The conditional densities $q_j(\cdot|x),j=0,1$ associated with $Q_j,j=0,1$ solve the following convex program:
\begin{align}
	(q_{0}(\cdot|x),q_{1}(\cdot|x))=\argmin_{\tilde q_{0}(\cdot|x),\tilde q_{1}(\cdot|x)}&~\sum_{y\in \cY}\ln\Big(\frac{\tilde q_0(y|x)+\tilde q_1(y|x)}{\tilde q_0(y|x)}\Big)(\tilde q_0(y|x)+\tilde q_1(y|x))\label{eq:lfp_program}\\
	s.t.&~ \nu_{\theta_0}(B|x)\le \sum_{y\in B}\tilde q_0(y|x),~B\subset \cY,~ x\in \cX\notag\\
	&~ \nu_{\theta_1}(B|x)\le \sum_{y\in B}\tilde q_1(y|x),~B\subset \cY,~ x\in \cX.\notag
\end{align}

\noindent
(iii) There exist constants $(\gamma,C)$ such that
\begin{align*}
		\phi(z) = \left\{
		\begin{array}{ll}
			1 & \text{ if} \quad \Lambda(z)>C\\
			\gamma & \text{ if} \quad \Lambda(z)=C\\
			0 & \text{ if} \quad \Lambda(z)<C,\\
		\end{array}
		\right.
\end{align*} 
is a level-$\alpha$ minimax test.
\end{proposition}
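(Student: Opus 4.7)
The plan is to reduce the three-part claim to the robust-testing framework of Huber-Strassen by exploiting that the set-valued restriction \eqref{eq:correspondence} makes the lower envelope $\nu_\theta$ in \eqref{eq:defnutheta} a belief function (totally monotone capacity), whose core is precisely $\cP_\theta$.

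Concretely, I would first observe that, by Choquet's theorem applied to the random set $\Gamma(X,U;\theta)$ together with \cite{philippe1999decision}, $\nu_{\theta_j}$ is totally monotone; Artstein's theorem, used earlier at \eqref{eq:artstein}, identifies its core with $\cP_{\theta_j}$. Belief functions are in particular 2-monotone, so the Huber-Strassen machinery applies to the disjoint pair $(\cP_{\theta_0},\cP_{\theta_1})$.

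For part (i), Huber-Strassen directly yields $Q_0\in\cP_{\theta_0}$ and $Q_1\in\cP_{\theta_1}$ such that the likelihood ratio $\Lambda=dQ_1/dQ_0$ is stochastically largest under $Q_0$ over $\cP_{\theta_0}$ and stochastically smallest under $Q_1$ over $\cP_{\theta_1}$; the two sup/inf identities are exactly these orderings applied to the half-lines $\{\Lambda>t\}$. For part (iii), I would then apply the Neyman-Pearson lemma to the simple-versus-simple problem $Q_0$ vs.\ $Q_1$: the randomized likelihood-ratio test with cutoff $C$ and mass $\gamma$ at $\Lambda=C$, chosen so that $Q_0(\Lambda>C)+\gamma Q_0(\Lambda=C)=\alpha$, is most powerful at level $\alpha$, and part (i) lifts its size bound and power guarantee uniformly to $\cP_{\theta_0}$ and $\cP_{\theta_1}$, yielding minimaxity.

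For part (ii), I would characterize the LFP through the stated convex program. The objective in \eqref{eq:lfp_program} is a strictly convex Csisz\'ar-type divergence of $(\tilde q_0+\tilde q_1)$ relative to $\tilde q_0$, and the constraints describe precisely $\fq{\theta_0,x}\times\fq{\theta_1,x}$. I would verify equivalence by writing the KKT conditions for \eqref{eq:lfp_program} and checking that the first-order conditions reproduce the Huber-Strassen LFP densities obtained in part (i); alternatively, one may invoke \cite{kaido2019robust}, who establish this reformulation directly using the 2-monotonicity of $\nu_\theta$. The main obstacle is part (i): the Huber-Strassen existence result is a nontrivial construction combining stochastic orderings with capacity 2-monotonicity, and I would defer to \cite{Huber:1973aa} rather than reconstruct the full argument.
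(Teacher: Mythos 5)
Your proposal is correct and takes essentially the same route as the paper, which does not give a self-contained proof of Proposition \ref{prop:cs} but justifies it as a special case of Theorem 3.1 of \cite{kaido2019robust}, resting on the Huber--Strassen theory for 2-alternating capacities \citep{Huber:1973aa}. Your chain of steps --- total monotonicity of $\nu_\theta$ via \cite{philippe1999decision} and Choquet's theorem, identification of its core with $\mathcal{P}_\theta$ via Artstein's theorem, Huber--Strassen for the existence of the least-favorable pair in part (i), Neyman--Pearson combined with the stochastic ordering for the minimaxity in part (iii), and the convex ($f$-divergence) program characterization in part (ii) deferred to the KKT conditions or to \cite{kaido2019robust} --- is exactly the intended argument.
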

Heuristically, $Q_0$ is the probability distribution consistent with the null parameter value, which makes size control most difficult. $Q_1$ is the distribution consistent with the alternative parameter value, which is least favorable for maximizing power. The LR-test based on this pair is minimax optimal.\footnote{Proposition \ref{prop:cs} is a special case of Theorem 3.1 in  \cite{kaido2019robust}. They provide explicit forms of the minimax tests for several examples.}

We use Proposition \ref{prop:cs} in the proof of Theorem \ref{thm:univ_inf} (see Appendix \ref{ssec:proofs}). 
The following is a sketch of the proof arguments. Let $\theta_1\in \Theta$ and let $Q_{\theta_1}\in\cP_{\theta_1}$. 
For each 
 $\theta\in\Theta_0$, consider distinguishing $\cP_{\theta_1}$ against a singleton set $\{Q_{\theta_1}\}$ and
form the LFP $(Q_{\theta},Q_{\theta_1})$. Repeating this exercise across $\theta\in\Theta_0$, we form a null parametric model $\{q_{\theta},\theta\in\Theta_0\}$ with $q_{\theta}=dQ_{\theta}/d\mu$. 
Let $T_n(\theta_1)=\cL_0(\theta_1)/\cL_0(\hat\theta_0)$. Suppose $P^n\in \cP^n_{\theta}$ for some $\theta\in\Theta_0$. Then,
\begin{align}
P^n(T_n(\theta_1)>\frac{1}{\alpha})\le \alpha E_{P^n}\Big[\frac{\cL_0(\theta_1)}{\cL_0(\hat\theta_0)}\Big]\le \alpha \sup_{\tilde P^n\in  \mathcal P^n_{\theta}}E_{\tilde P^n}\Big[\frac{\cL_0(\theta_1)}{\cL_0(\theta)}\Big],\label{eq:heuristics}
\end{align}
The first inequality is due to Markov's inequality, and the second inequality is due to $\cL_0(\hat\theta_0)\ge \cL_0(\theta)$ by $\hat\theta_0$ being the RMLE.
We show that the supremum on the right-hand side of \eqref{eq:heuristics} is attained by the product of the least-favoarable distribution $Q_\theta$ at $\theta$ and use this result to show $ \sup_{\tilde P^n\in  \mathcal P^n_{\theta}}E_{\tilde P^n}\Big[\frac{\cL_0(\theta')}{\cL_0(\theta)}\Big]\le 1$. This ensures that the size of the test is at most $\alpha$. The argument above fixed $\theta_1$, but an estimator $\hat\theta_1$ is used in practice. In the proof of Theorem \ref{thm:univ_inf}, we use a conditioning argument to handle the randomness of $\hat\theta_1$. 

\begin{remark}\label{rem:approximation}
Our analysis requires $\cP_{\theta,x}$ to be the core of a two-monotone capacity.  In some applications, one may want to work with a set $\mathcal M_{\theta,x}$ of probabilities satisfying certain inequalities
\begin{align}
Q(A|x)\ge \kappa_\theta(A|x), ~A\in\mathcal A	
\end{align}
for some $\kappa_\theta(\cdot|x)$ and a class of events $\mathcal A\subset \mathcal C$. This may arise when one wants to work with non-sharp bounds or a subset of Artstein's inequalities for computational reasons. The set function $\kappa_\theta$ may be a containment functional or another function that gives a more conservative bound.
In such settings, one could apply our method after approximating the lower envelope of $\mathcal M_{\theta,x}$ by a containment functional \citep{Montes:2018aa,MONTES2018181}.
\end{remark}

\section{Monte Carlo experiments}

We examine the performance of the proposed test through simulations. First, we use the two-player entry game example with the following payoff:
	\begin{align}
	\pi^{(j)}=y^{(j)}\big(\theta^{(j)}y^{(-j)}+u^{(j)}\big),~j=1,2.\label{eq:mc_payoff}
	\end{align}
We then test $H_0:\theta^{(j)}=0,j=1,2$ against $H_1:\theta^{(j)}<0$ for some $j$. As discussed in Example \ref{ex:game1}, the model is complete under the null hypothesis, which determines $\cL_0$. Hence, one only needs to determine $\hat\theta_1$. We consider two options. Both are extreme estimators. The first one is a minimizer of a sample criterion function based on the sample analog of the sharp identifying restrictions as in \eqref{eq:cht}. We call this estimator a moment-based estimator.
 The second one maximizes the information-based objective function as in \eqref{eq:klic_proj}.
We call this estimator an MLE.

We set the sample sizes to $n\in\{50,100,200\}$ and calculate the rejection probability of the tests at the alternatives with $\theta^{(j)}=-h$ with $h\ge 0$. Under each alternative, the outcome $Y_i=(1,0)$ is selected with probability 0.5 whenever the model admits multiple equilibria.
Table \ref{tab:power1} reports the size and power of the two cross-fit tests. The two tests have similar power profiles, suggesting the choice of the initial estimator $\hat\theta_1$ does not seem to matter, at least for this example. 
Overall, the proposed tests' rejection probabilities under $H_0$ are nearly 0. However, they exhibit considerable power even in small samples as $h$ deviates from 0. They could detect mild strategic interaction effects (e.g., $h=0.5$, which is a half standard-deviation unit of $u^{(j)}$) with high rejection probabilities, even in small samples. 

\begin{table}[htbp]
    \centering
    \caption{Size and Power of the Cross-Fit Tests for testing $H_0:\theta^{(j)}=0,j=1,2$}
    \label{tab:power1}
    \resizebox{\textwidth}{!}{	
    \begin{tabular}{lccccccccccccccc}
        \toprule
       & Size & \multicolumn{14}{c}{Power (values of $h$ below)} \\
        
        \cmidrule{3-16}
        &  & 0.069 & 0.138 & 0.207 & 0.276 & 0.345 & 0.414 & 0.483 & 0.552 & 0.621 & 0.690 & 0.759 & 0.828 & 0.897 & 0.966 \\
        \hline
        Panel A: ($n=50$) & \multicolumn{15}{c}{}\\
        LR-test (MLE $\hat\theta_1$) & 0 & 0.000 & 0.002 & 0.013 & 0.049 & 0.099 & 0.175 & 0.267 & 0.387 & 0.521 & 0.631 & 0.744 & 0.821 & 0.887 & 0.928 \\
        LR-test (moment-based $\hat\theta_1$) & 0 & 0.002 & 0.006 & 0.018 & 0.056 & 0.103 & 0.184 & 0.280 & 0.391 & 0.517 & 0.631 & 0.755 & 0.825 & 0.891 & 0.933 \\
        & \multicolumn{15}{c}{}\\
        Panel B: ($n=100$) & \multicolumn{15}{c}{}\\
        LR-test (MLE $\hat\theta_1$) & 0 & 0.001 & 0.011 & 0.073 & 0.196 & 0.370 & 0.576 & 0.760 & 0.885 & 0.948 & 0.973 & 0.992 & 0.996 & 1.000 & 1.000 \\
        LR-test (moment-based $\hat\theta_1$) & 0 & 0.002 & 0.016 & 0.081 & 0.209 & 0.383 & 0.582 & 0.762 & 0.877 & 0.942 & 0.976 & 0.988 & 0.993 & 0.998 & 1.000 \\
        & \multicolumn{15}{c}{}\\
        Panel C: ($n=200$) & \multicolumn{15}{c}{}\\
        LR-test (MLE $\hat\theta_1$)  & 0.001 & 0.007 & 0.060 & 0.235 & 0.522 & 0.794 & 0.948 & 0.988 & 0.997 & 1 & 1 & 1 & 1 & 1 & 1 \\
        LR-test (moment-based $\hat\theta_1$) & 0.002 & 0.008 & 0.066 & 0.246 & 0.521 & 0.776 & 0.940 & 0.988 & 0.996 & 1 & 1 & 1 & 1 & 1 & 1 \\
    
        \bottomrule
    \end{tabular}}
\end{table}

\begin{table}[htbp]
    \centering
    \caption{Size and Power of the Cross-Fit Tests for testing $H_0:\delta^{(j)}=0,j=1,2$.}
    \label{tab:power2}
    \resizebox{\textwidth}{!}{	
    \begin{tabular}{lccccccccccccccc}
        \toprule
     $n$  & Size & \multicolumn{14}{c}{Power (values of $h$ below)} \\
               \cmidrule{3-16}
    &  & 0.105 & 0.211 & 0.316 & 0.421 & 0.526 & 0.632 & 0.737 & 0.842 & 0.947 & 1.053 & 1.158 & 1.263 & 1.368 & 1.474 \\
    \hline
       50 &   0 & 0.000 & 0.005 & 0.030 & 0.100 & 0.222 & 0.354 & 0.493 & 0.612 & 0.732 & 0.834 & 0.894 & 0.940 & 0.972 & 0.982 \\
       100 &  0 & 0.000 & 0.011 & 0.070 & 0.194 & 0.375 & 0.499 & 0.631 & 0.748 & 0.864 & 0.926 & 0.963 & 0.978 & 0.989 & 0.998 \\
        200 & 0 & 0.000 & 0.056 & 0.252 & 0.504 & 0.665 & 0.790 & 0.867 & 0.929 & 0.965 & 0.981 & 0.987 & 0.991 & 0.994 & 0.997 \\
        300 & 0 & 0.006 & 0.158 & 0.558 & 0.809 & 0.912 & 0.954 & 0.973 & 0.989 & 0.998 & 0.996 & 0.996 & 0.998 & 0.997 & 0.997 \\
         \bottomrule
    \end{tabular}}
    \bigskip

     \footnotesize{Note: The size and power are calculated based on $S=1000$ simulations.  The DGP involves two independently distributed covariates $(X^{(1)},X^{(2)})$ with $|supp(X^{(j)})|=5$. The average sample size in each bin  is at most 12.}
\end{table}

In the second experiment, we use the specification of payoff functions in \eqref{eq:payoff}. For each $j$, $X^{(j)}$ is a covariate that takes $K=5$ discrete values $\{-2,-1,0,1,2\}$, and the two covariates are generated independently. When multiple equilibria are predicted, one of them gets selected with a probability of 0.5.
We test $H_0:\delta^{(j)}=0,j=1,2$ against $H_1:\delta^{(j)}\ne 0$ for some $j$. A notable difference from the previous specification is that the model is incomplete under the null hypothesis, and $\theta$ contains nuisance components (i.e., $\beta^{(j)},j=1,2$ for this design). 
We evaluate the power of the test against alternatives with $\delta^{(j)}=h,h>0$ for samples of size $n\in\{50,100,200,300\}$. For this experiment, we use the moment-based estimator as $\hat\theta_1$.
Table \ref{tab:power2} summarizes the result.
As in the previous design, the test's size is nearly 0, but it has meaningful power against alternatives, even in small samples. The test also exhibits monotonically increasing power curves. Hence, it again shows promise in detecting violations of the restrictions.

In the third experiment, we compare our test to an existing test of \cite{BCS} using large samples ($n=5000,7500$).\footnote{With covariates taking 25 different values with equal probabilities, the average sample size in each bin would become extremely small if we use a small sample (e.g., $n=100$). This feature was not an issue for the cross-fit LR test. However, it caused computational issues for implementing \citeposs{BCS} test. Instead of adding modifications not considered in their original paper, we work in an environment in which their procedure is reliable. We found that their test was computationally reliable for sample sizes $n=5000,7500$.}   We call their procedure a moment-based test. Figure \ref{fig:power_bcs_comparison} reports the power curves of the two tests. Both tests show monotonically increasing power curves, and neither of them is uniformly dominant. This result shows that the cross-fit LR test can compete with a well-established test in terms of power.
Table \ref{tab:comptime} reports the computation time required to implement the tests.\footnote{They were computed using Boston University's Shared Computing Cluster (SCC) nodes equipped with 2.6 GHz Intel Xeon processor (E5-2650v2) and 128GB memory (per node). } \citeposs{BCS} test uses a bootstrap critical value.
To mimic a realistic scenario, we parallelized their bootstrap replications ($B=500$) across multiple processors. The table shows that the cross-fit LR test takes about 14 seconds (without any parallelization), which is significantly below the computation time required for the moment-based test with parallelization.

In sum, the simulation results show that the cross-fit LR test has considerable power, even in small samples.
In large samples for which existing tests are applicable, the proposed test has power properties comparable to a well-established test.

\begin{figure}
    \centering
    \includegraphics[scale=.7]{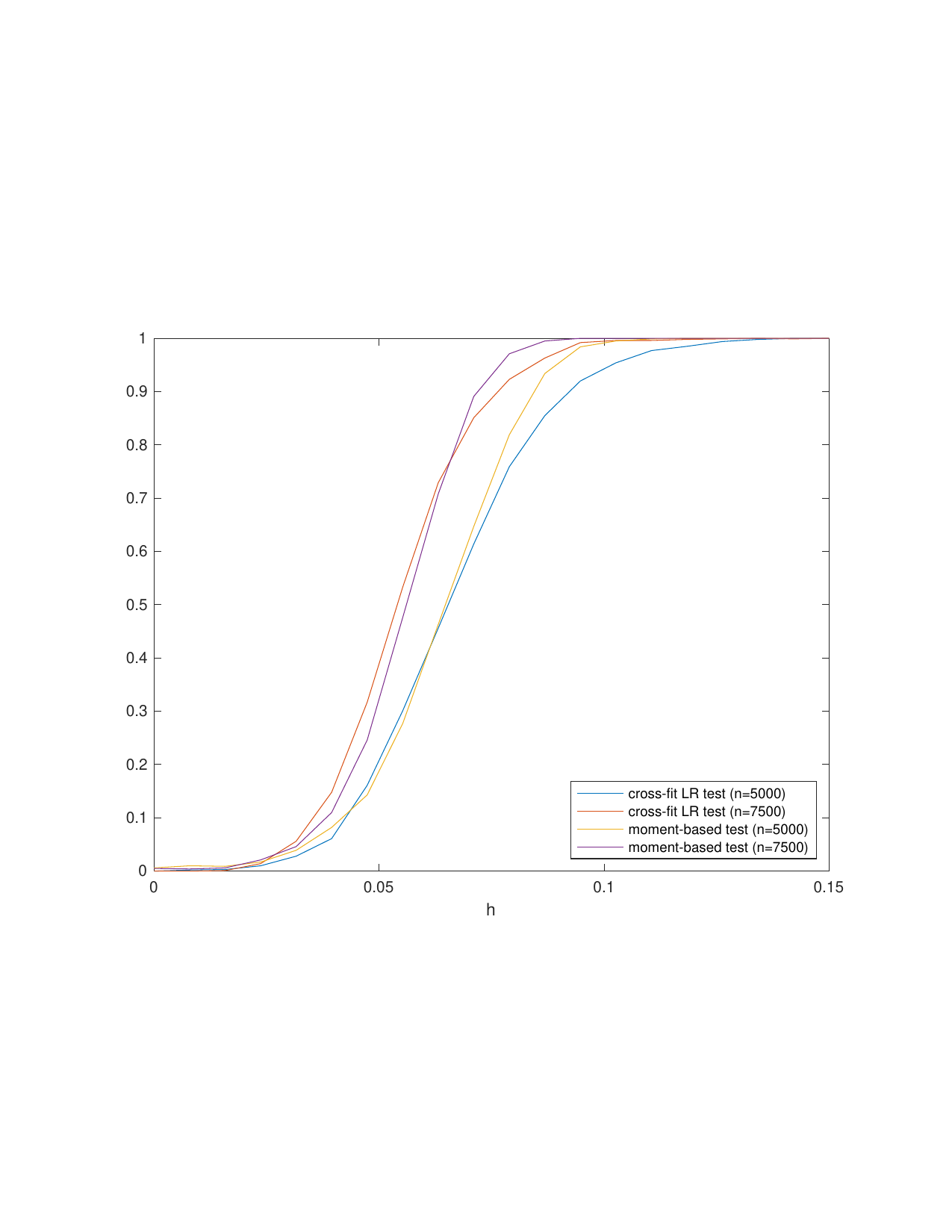}
    \caption{Power of the Cross-fit LR and Moment-based Tests: ($S=1000$ replications)}
    \label{fig:power_bcs_comparison}
\end{figure}

\begin{table}
    \centering
    \begin{tabular}{cccc}
        \toprule
        Cross-fit LR test & \multicolumn{3}{c}{Moment-based test} \\
        \cmidrule(lr){2-4}
        & 4 cores & 8 cores & 16 cores \\
        \midrule
        13.75 &  111.65 &  56.64 &   41.84 \\
        \bottomrule
    \end{tabular}
    
    \caption{Computation Time (in seconds)}
    \label{tab:comptime}
    \footnotesize{Note: The median computation time is calculated based on $S=1000$ simulations for the Cross-fit LR test. For the moment-based test of \cite{BCS}, we parallelized bootstrap replications with 4, 8, and 16 cores. The median computation time is calculated based on $S=100$ simulation repetitions. }
\end{table}

\section{Discussion}
This paper develops a novel likelihood-based test and confidence sets for incomplete models. They apply to a wide range of discrete choice models involving set-valued predictions. Yet, they are simple to implement. To retain simplicity, this paper uses simple two-fold cross-fitting. An avenue for further research is to examine whether alternative sample-splitting schemes can improve the statistical properties and replicability of the proposed method. For the latter, \cite{Ritzwoller:2024aa} recently proposed a way to control an error rate due to sample splitting by sequentially aggregating statistics, which is a fruitful direction.
Our simulation results suggest that the proposed test's power is comparable to that of a well-established moment-based test in large samples. It is worthwhile to study the proposed test's asymptotic power properties further.

\clearpage
\appendix

\section{Proofs}

\subsection{Preliminaries}\label{sec:capacities}
We introduce set functions called capacities and discuss their basic properties. We refer to \cite{Denneberg:1994aa} for technical treatments.

Let $\Sigma_\cS$ be the Borel $\sigma$-algebra. A function $\nu:\Sigma_\cS\to\mathbb R$ with $v(\emptyset)=0$ is a \emph{capacity}. Throughout, we assume $\nu(A)\ge 0,~\forall A\in\Sigma_\cS$, $\nu(\cS)=1$ (i.e. normalized). We also assume $\nu$ is monotone. That is, for any $A,B\in\Sigma_{\cS}$, $A\subseteq B\Rightarrow \nu(A)\le \nu(B)$.  Capacity $\nu$ is said to be \emph{monotone of order $k$} or, for short, \emph{k-monotone} if for any $A_i\subset S,i=1\cdots,k$,
\begin{align}
	\nu\big(\cup_{i=1}^k A_i\big) \ge \sum_{I\subseteq\{1,\cdots,k\}, I\ne \emptyset}(-1)^{|I|+1}\nu\big(\cap_{i\in I}A_i\big).
\end{align}
If the property holds for any $k$, it is called a \emph{totally monotone} capacity.
The conjugate $\nu^*(A)=1-\nu(A^c)$ of a $k$-monotone capacity is called a \emph{$k$-alternating} capacity.
For any capacity $\nu$ and a real-valued function $f$ on $\cS$, the \emph{Choquet integral} of $f$ with respect to $\nu$ is defined by
\begin{align}
\int fd\nu\equiv \int_{-\infty}^0(\nu(\{s:f(s)\ge t\})-\nu(\cS))dt+\int_0^\infty \nu(\{s:f(s)\ge t\})dt.\label{eq:choquet}
\end{align}

Let $A(D_1)\in \Sigma_\cS$ be a measurable set, which is allowed to depend on subsample $D_1$. 
For each $\theta\in \Theta$, let
\begin{align}
	\nu^*_\theta(A(D_1)|D_1)\equiv \int1\{G(u|X;\theta)\cap A(D_1)\ne\emptyset\}dF_\theta(u).
\end{align}
The set function $\nu^*_\theta(\cdot|D_1)$ is then a totally-alternating capacity \citep{philippe1999decision}.

\subsection{Proof of Theorems}\label{ssec:proofs}
\begin{proof}[Proof of Theorem \ref{thm:univ_inf}]
We present a version of the proof for the split-sample statistic $T_n$ first. Let $\theta\in\Theta_0$ and
let $P^n\in \mathcal P^n_{\theta}$. Let $P^{D_j}$ be the marginal distribution of $P^n$ on $D_j$, and let $P^n(\cdot|D_1)$ be the conditional distribution given $\{(Y_i,X_i),i\in D_1\}$. 
By Markov's inequality,
\begin{align}
	P^n(T_n>\frac{1}{\alpha})\le \alpha E_{P^n}[T_n]=\alpha E_{P^{D_1}}[E_{P^n}[T_n|D_1]]. \label{eq:sizeproof1}
\end{align}
For each $i\in D_0$, let $\Lambda_i=q_{\hat\theta_1}(Z_i)/q_\theta(Z_i)$ and define $T^*_n(\theta)=\cL_0(\hat\theta_1)/\cL_0(\theta)=\prod_{i\in D_0}\Lambda_i.$
We may bound $E_{P^{n}}[T_n|D_1]$ as follows.
\begin{align}
	E_{P^n}[T_n|D_1]&\le \sup_{\tilde P^n\in  \mathcal P^n_{\theta}}E_{\tilde P^n}[T_n|D_1]\notag\\
	&\stackrel{(1)}{\le}\sup_{\tilde P^n\in  \mathcal P^n_{\theta}}E_{\tilde P^n}[T^*_n(\theta)|D_1]\notag\\
	&\stackrel{(2)}{=}\prod_{i\in D_0}\sup_{\tilde P\in \mathcal P_\theta}E_{\tilde P}[\Lambda_i|D_1]\notag\\
	&\stackrel{(3)}{=}\prod_{i\in D_0}\int \Lambda_id\nu^*_\theta(\cdot|D_1),\label{eq:sizeproof2}
\end{align}	
where (1) is due to $T_n\le T^*_n(\theta)$ for any $\theta\in\Theta_0$ due to $\hat\theta_0$ being the maximizer of $\cL_0$, (2) follows because of Assumption \ref{as:iidF} (i) and the definition of $\cP^n_\theta$ in \eqref{eq:cPtheta}, and (3) follows from $\sup_{\tilde P\in \mathcal P_\theta}E_{\tilde P}[\Lambda_i|D_1]=\int \Lambda_id\nu^*_\theta(\cdot|D_1)$, a property of the Choquet integral for two-alternating capacities \citep{Schmeidler:1986aa}.
Observe that 
\begin{align}
	\int \Lambda_id\nu^*(\cdot|D_1)=\int_0^\infty \nu^*_\theta(\Lambda_i\ge t|D_1)dt=\int_0^\infty \nu^*_\theta(\Lambda_i> t|D_1)dt.\label{eq:sizeproof3}
\end{align}
Suppose $\cP_{\theta}\cap\cP_{\hat\theta_1}=\emptyset$.
By Proposition \ref{prop:cs} (i), there exists $Q_\theta\in \mathcal P_\theta$  such that, for all $t$,
\begin{align}
	\nu^*_\theta(\Lambda_i> t|D_1)=Q_\theta(\Lambda_i> t|D_1),\label{eq:sizeproof4}
\end{align}
and $Q_\theta$'s conditional density $q_\theta$ solves \eqref{eq:def_qtheta} by Proposition \ref{prop:cs} (ii). If $\cP_{\theta}\cap\cP_{\hat\theta_1}$ is non-empty, we take $Q_\theta$ to be $Q_{\hat\theta_1}$.

Let $A$ be the support of $q_\theta$. Then,
\begin{multline}
	\int_0^\infty \nu^*_\theta(\Lambda_i> t|D_1)dt=\int_0^\infty Q_\theta(\Lambda_i> t|D_1)dt\\
 =E_{Q_\theta}\Big[\frac{q_{\hat\theta_1}(Z_i)}{q_{\theta}(Z_i)}\Big|D_1\Big]=\int_A\frac{q_{\hat\theta_1}(z)}{q_{\theta}(z)}q_\theta(z)dz\le \int q_{\hat\theta_1}(z)dz=1.\label{eq:sizeproof5}
\end{multline}
Combining \eqref{eq:sizeproof2}-\eqref{eq:sizeproof5} yields $E_{P^{D_1}}[T_n|D_1]\le 1.$ Conclude that $P^n(T_n>\frac{1}{\alpha})\le \alpha E_{P^n}[T_n]\le 1$ from \eqref{eq:sizeproof1} and observe that the bound applies uniformly across $P^n\in\cP^n_0$.

For the cross-fit test, arguing as in \eqref{eq:sizeproof1},
\begin{align}
	P^n(S_n>\frac{1}{\alpha})\le \alpha E_{P^n}[S_n]=\alpha E_{P^n}\Big[\frac{T_n+T^{\text{swap}}_n}{2}\Big]=\frac{\alpha}{2}(E_{P^n}[T_n]+E_{P^n}[T^{\text{swap}}_n]).
\end{align}
 The rest of the proof is essentially the same. 
\end{proof}

\begin{proof}[Proof of Corollary \ref{cor:univ_cs}]
The argument is by the standard test inversion. By Theorem \ref{thm:univ_inf}, for any $n$,
\begin{align*}
    P^n(\varphi^*\notin  CS_n)=P^n(S_n(\varphi^*)>\frac{1}{\alpha})\le \alpha,
\end{align*}
implying $ P^n(\varphi^*\in  CS_n)  \ge 1-\alpha$ uniformly in  $(\varphi^*,P^n)\in \mathcal F^n$.
\end{proof}

\subsection{Proof of the Propositions}

\begin{proof}[Proof of Proposition \ref{prop:game_lfp}]
By \eqref{eq:frak_q_CT}, it suffices solve the following program
\begin{align*}
\argmin_{q(\cdot|x)\in \Delta^\cY}&~\sum_{y\in \{(0,0),(1,1),(1,0),(0,1)\}}\ln\Big(\frac{q(y|x)+p(y|x)}{q(y|x)}\Big)(q(y|x)+p(y|x))\\
	s.t.&~q((0,0)|x)=\f(S_{\{(0,0)\}|x;\theta}|x)\\
 &~q((1,1)|x)=\f(S_{\{(1,1)\}|x;\theta}|x)\\
	 &\f(S_{\{(1,0)\}|x;\theta}|x)\le q((1,0)|x)\leq \f(S_{\{(1,0)\}|x;\theta})+\f(S_{\{(0,1),(1,0)\}|x;\theta}|x).
\end{align*}
Note that $q_\theta((0,0)|x)$ and $q((1,1)|x)$ are determined by the equality constraints, which gives \eqref{eq:eg_prof_likelihood1}-\eqref{eq:eg_prof_likelihood2}. Therefore, it suffices to solve the problem above for $q((1,0)|x)$. Let $z=q((1,0)|x)$ and note that one can express $q((0,1)|x)$ as $q((0,1)|x)=\eta_1(\theta;x)-z$. Consider 
\begin{align*}
\min_{z\in [0,1]} &  \ln\Big(\frac{z+p((1,0)|x)}{z}\Big)(z+p((1,0)|x))\\
&~~+\ln \Big(\frac{\eta_1(\theta;x)-z+p((0,1)|x)}{\eta_1(\theta;x)-z}\Big)(\eta_1(\theta;x)-z+p((0,1)|x))\\
s.t.& \f(S_{\{(1,0)\}|x;\theta}|x)\le z\leq \f(S_{\{(1,0)\}|x;\theta})+\f(S_{\{(0,1),(1,0)\}|x;\theta}|x)
\end{align*}
If $\f(S_{\{(0,1),(1,0)\}|x;\theta}|x)>0$, Slater's condition is satisfied. Solving the Karush-Kuhn-Tucker (KKT) condition for this problem yields \eqref{eq:eg_prof_likelihood4}.  If $\f(S_{\{(0,1),(1,0)\}|x;\theta}|x)=0$, the model is complete, and the solution reduces to $q_\theta(y|x)=\f(S_{\{y\}|x;\theta}|x)$ for all $y$, which is a special case of \eqref{eq:eg_prof_likelihood1}-\eqref{eq:eg_prof_likelihood4}.
\end{proof}

\section{DGPs with Unknown Dependence}\label{sec:dependence}
We consider more general data-generating processes than $\mathcal F_0$. 
For each $i$, let $G_i:\cU\times\cX\times\Theta\to \cY$ be a weakly measurable correspondence.
Let
\begin{align}
G^n(u^n|x^n;\theta)\equiv \prod_{i=1}^n G_i(u_i|X_i;\theta).    
\end{align}
Let
\begin{multline}
	\tilde{\mathcal P}^n_\theta\equiv\Big\{P^n\in \Delta(\cS^n):P^n(\cdot|x^n)=\int_{\cU^n}\eta(A|u^n,x^n) dF^n_\theta(u^n),~\forall A\in \Sigma_{\cY^n},\\
	\eta(\cdot|u^n,x^n)\in \Delta(G^n(u^n|x^n;\theta)),~a.s.\Big\}.
\end{multline}
This set allows arbitrary dependence of the outcome sequence $Y^n=(Y_1,\dots,Y_n)$ through the selection mechanism across $n$ units.
The following result is from \cite{kaido2019robust}.
\begin{theorem}\label{thm:kz19}
	Suppose $\{(U_i,X_i)\}$ are independently distributed across $i$. Suppose, for each $i$, $\mathcal P_{\theta_0,i}$ and $\mathcal P_{\theta_1,i}$ are disjoint.
 Then, LFP $(Q_{0}^n,Q_{1}^n)\in\tilde{\mathcal P}_{\theta_0}^n\times\tilde{\mathcal P}_{\theta_1}^n$ exists such that for all $t\in\mathbb R_+$,
	\begin{align}
	\nu^{*,n}_{\theta_0}(\Lambda_n>t)&=Q_0^n(\Lambda_n>t)\\	
	\nu^{n}_{\theta_1}(\Lambda_n>t)&=Q_1^n(\Lambda_n>t),	
	\end{align}
where $\Lambda_n=dQ_1^n/dQ_0^n$. The LFP consists of the product measures:
\begin{align}
	Q^n_0=\bigotimes_{i=1}^n Q_{0,i},~~\text{ and }~~Q^n_1=\bigotimes_{i=1}^n Q_{1,i},
\end{align}
where, for each $i\in\mathbb N$, $(Q_{0,i},Q_{1,i})\in \mathcal P_{\theta_0,i}\times\mathcal P_{\theta_1,i}$ is the LFP in the $i$-th experiment;    
\end{theorem}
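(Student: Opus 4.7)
The plan is to construct the LFP as the $n$-fold product of the per-experiment LFPs that Proposition \ref{prop:cs} supplies, and to verify the two capacity identities by an inductive argument that exploits the independence of $\{(U_i,X_i)\}$ across $i$.

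First, for each $i$, apply Proposition \ref{prop:cs} with $\cP_{\theta_0,i}\cap\cP_{\theta_1,i}=\emptyset$ to obtain per-experiment LFPs $(Q_{0,i},Q_{1,i})\in\cP_{\theta_0,i}\times\cP_{\theta_1,i}$ and a version of the Radon--Nikodym derivative $\ell_i=dQ_{1,i}/dQ_{0,i}$. Define $Q_0^n=\bigotimes_{i=1}^n Q_{0,i}$ and $Q_1^n=\bigotimes_{i=1}^n Q_{1,i}$. By the product-measure construction, $\Lambda_n=dQ_1^n/dQ_0^n=\prod_{i=1}^n\ell_i$. Because any independent-across-$i$ selection is a special case of the dependent selections permitted by $\tilde{\mathcal P}^n_\theta$, the inclusions $Q_j^n\in\tilde{\mathcal P}^n_{\theta_j}$ for $j\in\{0,1\}$ are immediate.

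The core step is to establish the upper-capacity identity $\nu^{*,n}_{\theta_0}(\Lambda_n>t)=Q_0^n(\Lambda_n>t)$. For any $P^n\in\tilde{\mathcal P}^n_{\theta_0}$ with selection kernel $\eta$,
\begin{equation*}
P^n(\Lambda_n>t)=\int\eta(\{y^n:\Lambda_n(y^n,x^n)>t\}\mid u^n,x^n)\,dF^n_{\theta_0}(u^n,x^n)\le\nu^{*,n}_{\theta_0}(\Lambda_n>t),
\end{equation*}
since $\eta(\cdot\mid u^n,x^n)$ is supported on $G^n(u^n|x^n;\theta_0)$. To show that $Q_0^n$ attains this upper envelope, I would induct on $n$: condition on $(U^{n-1},X^{n-1},X_n)$; the product structure $G^n=\prod_i G_i$ together with independence of $(U_i,X_i)$ reduces the event $\{\Lambda_n>t\}$ to a conditional event $\{\ell_n(Y_n,X_n)>t/\prod_{i<n}\ell_i(Y_i,X_i)\}$ on the last coordinate; invoking the per-experiment identity $\nu^*_{\theta_0,n}(\ell_n>s)=Q_{0,n}(\ell_n>s)$ from Proposition \ref{prop:cs}(i) for the random threshold $s=t/\prod_{i<n}\ell_i$ closes the induction. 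The lower-capacity identity $\nu^n_{\theta_1}(\Lambda_n>t)=Q_1^n(\Lambda_n>t)$ then follows from the dual relation $\nu^n_{\theta_1}(\cdot)=1-\nu^{*,n}_{\theta_1}(\cdot^c)$ and a parallel argument applied to the conjugate capacity.

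The hardest part will be the inductive verification that the \emph{product} selection attains the upper capacity on the non-cylindrical event $\{\prod_i\ell_i>t\}$. A priori, an adversarial $\eta$ could couple the $Y_i$'s so as to concentrate joint mass on this event in ways that coordinate-wise independent selections cannot. Ruling this out requires the totally alternating (in particular, two-monotone) structure of each $\nu^*_{\theta_0,i}$, which yields a Choquet--Fubini-type identity under independence of $(U_i,X_i)$, together with the monotonicity of $\{\prod_i\ell_i>t\}$ in each $\ell_i$. These two facts together let the joint optimization over $\eta$ decouple into coordinate-wise optimizations, each solved by the corresponding per-experiment LFP, delivering the product-measure identity.
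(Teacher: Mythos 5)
The paper does not actually prove this theorem; it imports it from the companion paper \cite{kaido2019robust}, so there is no in-paper proof to compare against. Judged on its own terms, your construction (take the per-experiment LFPs from Proposition \ref{prop:cs}, form the products, note $\Lambda_n=\prod_i\ell_i$ and that product measures are trivially in $\tilde{\mathcal P}^n_{\theta_j}$) is the right one, and you have correctly isolated the crux: showing that no jointly coupled selection can put more mass on $\{\Lambda_n>t\}$ than the product LFP. But your proposed induction is the weak link as stated. Conditioning on $(U^{n-1},X^{n-1},X_n)$ does not make the threshold $t/\prod_{i<n}\ell_i(Y_i,X_i)$ deterministic, because $Y^{n-1}$ is still selection-dependent given the latent variables; and if you instead condition on $Y^{n-1}$, the conditional selection of $Y_n$ may depend on $Y^{n-1}$ in an adversarial way, so the per-experiment identity does not apply coordinate-by-coordinate without further argument.

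The gap closes cleanly without induction. For \emph{any} event $A$, the supremum of $P^n(A)$ over all selections $\eta(\cdot|u^n,x^n)\in\Delta(G^n(u^n|x^n;\theta_0))$ is attained by putting mass on a point of $G^n\cap A$ whenever that intersection is nonempty, so $\sup_{P^n\in\tilde{\mathcal P}^n_{\theta_0}}P^n(A)=\nu^{*,n}_{\theta_0}(A)$ exactly --- adversarial coupling buys nothing beyond the hitting functional. Now use that $\cY$ is finite, $\ell_i\ge 0$, and $G^n=\prod_i G_i$: the intersection $G^n(u^n|x^n;\theta_0)\cap\{\prod_i\ell_i>t\}$ is nonempty if and only if $\prod_i\bar\ell_i(u_i,x_i)>t$, where $\bar\ell_i(u_i,x_i)\equiv\max_{y\in G_i(u_i|x_i;\theta_0)}\ell_i(y,x_i)$. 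The per-experiment identity from Proposition \ref{prop:cs}(i) says precisely that $\bar\ell_i$ under $F_{\theta_0,i}$ is equidistributed with $\ell_i(Y_i,X_i)$ under $Q_{0,i}$; since the $(U_i,X_i)$ are independent across $i$, the product of the independent $\bar\ell_i$ has the same law as $\Lambda_n$ under $Q_0^n$, giving $\nu^{*,n}_{\theta_0}(\Lambda_n>t)=Q_0^n(\Lambda_n>t)$. The containment-functional identity at $\theta_1$ is the mirror image with $\min_{y\in G_i}\ell_i$ in place of the max (again using nonnegativity so the min of the product is the product of the mins); it is worth doing this directly rather than by conjugation, since the conjugate of the event $\{\Lambda_n>t\}$ is $\{\Lambda_n\le t\}$ and one must handle the boundary $\{\Lambda_n=t\}$ with care. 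With the inductive step replaced by this factorization, your proof is complete.
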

The result above states for distinguishing $\theta_0$ against $\theta_1$, the least favorable pair consists of product measures. If $(X_i,U_i)$ are identically distributed across $i$, the theorem implies the LFP consists of i.i.d. laws. This allows us to generalize the result in the text.

The i.i.d. sampling assumption in  \cite{Wasserman:2020aa}   can be relaxed as long as one can compute a likelihood for $D_0$ conditional on $D_1$. This requirement is crucial for the sample-splitting (and cross-fitting) to work. The distribution of outcomes can be heterogeneous and dependent across $i$ in unknown ways in this section. This feature makes it hard to define a conditional likelihood and apply the argument we used. 
For the conditioning argument to work, we construct $\hat\theta_1$ using outcomes that can be uniquely determined by $(X_i,U_i)$ but not by the selection.
Theorem \ref{thm:universal2} below generalizes the universal inference result to a wider class of distributions.

We now explain how to construct an estimator $\hat\theta_1$ of $\theta$ under the general model.
\begin{definition}[Initial estimator]\label{def:theta1}
$\hat\theta_1$ is an \emph{initial estimator} of $\theta$ such that
(i)	it is constructed from $\{W_i=\varphi(Y_i,X_i),i\in D_1\}$ for a measurable function $\varphi:\cY\times\cX\to\cW$; (ii) Under $H_0$, one can wirte $W_i$ as $W_i=f(X_i,U_i)$ for some measurable function $f:\cX\times\cU\to \cW$.
\end{definition}
Definition \ref{def:theta1} states $\hat\theta_1$ is a function of observable variables from $D_1$, which can also be expressed as a function of the exogenous variables $(X_i,U_i)$ at least under the null hypothesis. In Example \ref{ex:game1}, the model is complete under $H_0$. Hence, there is a unique reduced form $Y_i=g(U_i|X_i;\theta)$. This ensures we can construct $\hat\theta_1$ using directly $W_i=Y_i,i\in D_1$, i.e., $\varphi$ is the identity map. For example, one can use an extremum estimator
\begin{align}
	\hat\theta_1\in \argmin_{\theta\in\Theta}\hat{\mathsf Q}_1(\theta),
\end{align}
where $\hat{\mathsf Q}_1$ can be an objective function based on moment inequalities as discussed earlier.

In Example \ref{ex:game1}, the model is incomplete under $H_0$. Hence, some outcomes cannot be expressed as a function of the exogenous variables (e.g., $Y_i=(1,0)$ gets selected from multiple equilibria). Nonetheless, as shown by \cite{bresnahan1990entry,berry1992estimation}, the number of entrants
\begin{align}
	W_i=\varphi(Y_i)=\sum_{j} Y_i^{(j)}
\end{align}
is uniquely determined as a function of $(X_i,U_i)$. For a two-player game, a natural candidate is 
\begin{align}
	\hat\theta_1\in\argmax_{\theta\in\Theta}\sum_{i\in D_1}\sum_{k=0}^21\{W_i=k\}\ln F_\theta(S_{k,X_i}).
\end{align}
With this change, the rest of the procedure remains the same. For each $i$, let $q_{\hat\theta_1,i}$ be a solution to \eqref{eq:defp}. Solve \eqref{eq:def_qtheta} to find $q_{\theta,i},\theta\in\Theta_0$. Finally, form the cross-fit LR statistic as in \eqref{eq:def_Sn}. One can use the product LR statistic due to Theorem \ref{thm:kz19}. The following theorem establishes the validity of the procedure.
Let $\tilde{\cP}^n_0\equiv\{P^n\in\tilde{\cP}^n_\theta:\theta\in\Theta_0\}$.

\begin{theorem}\label{thm:universal2}
Suppose $(X_i,U_i),i=1,\dots,n$ are independently distributed across $i$, and $U_i|X_i=x\sim F_{\theta,i}(\cdot|x)$.
\begin{align}
\sup_{P^n\in \tilde{\cP}^n_0}P^n\big(S_n>\frac{1}{\alpha}\big)\le \alpha.      
\end{align}
  
\end{theorem}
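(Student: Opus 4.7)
The plan is to mirror the proof of Theorem \ref{thm:univ_inf}, replacing Proposition \ref{prop:cs} with Theorem \ref{thm:kz19} when invoking the least-favorable pair. The essential new feature is that $P^n \in \tilde{\cP}^n_\theta$ permits arbitrary cross-sectional dependence of $Y_i$ through the selection mechanism, so the factorization in step (2) of the original proof is no longer immediate. However, Theorem \ref{thm:kz19} guarantees that the LFP $(Q^n_\theta, Q^n_{\hat\theta_1})$ consists of product measures, so once we condition properly on $W_{D_1}$ the expectation of the likelihood ratio factorizes across $i \in D_0$, with each factor controlled exactly as in \eqref{eq:sizeproof5}.

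Concretely, I would fix $\theta \in \Theta_0$ and $P^n \in \tilde{\cP}^n_\theta$. Markov's inequality combined with the RMLE property of $\hat\theta_0$ yields $P^n(T_n > 1/\alpha) \le \alpha E_{P^n}[T_n^*(\theta)]$, where $T_n^*(\theta) = \prod_{i \in D_0} \Lambda_i$ and $\Lambda_i = q_{\hat\theta_1}(Y_i|X_i)/q_\theta(Y_i|X_i)$. I would then condition on $\{W_i, i \in D_1\}$, which determines $\hat\theta_1$ by Definition \ref{def:theta1}(i). For each realization $w$ of $W_{D_1}$, one distinguishes $\theta$ against the \emph{fixed} alternative $\hat\theta_1(w)$: Theorem \ref{thm:kz19} supplies a product LFP $(Q^{|D_0|}_\theta, Q^{|D_0|}_{\hat\theta_1(w)}) = \bigotimes_{i \in D_0}(Q_{\theta,i}, Q_{\hat\theta_1(w),i})$, and the Choquet integral identity together with the change-of-measure computation in \eqref{eq:sizeproof3}--\eqref{eq:sizeproof5} give $\sup_{\tilde P^n} E_{\tilde P^n}[T_n^*(\theta) | W_{D_1} = w] \le \prod_{i \in D_0} E_{Q_{\theta,i}}[\Lambda_i] \le 1$.

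The main obstacle is verifying that the conditional law of $\{(Y_i, X_i), i \in D_0\}$ given $W_{D_1} = w$ is dominated, in the Choquet-capacity sense, by $\nu^{*,|D_0|}_\theta$, so that the supremum form of Theorem \ref{thm:kz19} indeed applies to the conditional expectation. Here Definition \ref{def:theta1}(ii) is essential: under $H_0$, $W_i = f(X_i, U_i)$, and the assumed independence of $\{(X_i, U_i)\}$ across $i$ ensures that $W_{D_1}$ is independent of $\{(X_i, U_i), i \in D_0\}$, so the marginal law of the latent/exogenous block is preserved by the conditioning. Any conditional selection for $Y_{D_0}$ given $W_{D_1}$ and the exogenous variables remains a measurable selection from $\prod_{i \in D_0} G_i(U_i|X_i;\theta)$, so the conditional law still lies in $\tilde{\cP}^{|D_0|}_\theta$. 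Integrating the component-wise bound over $w$ via the tower property then gives $E_{P^n}[T_n^*(\theta)] \le 1$ uniformly in $P^n \in \tilde{\cP}^n_0$. Swapping the roles of $D_0$ and $D_1$ and averaging, as in the original proof, yields the analogous bound for $S_n$, which combined with the opening Markov step delivers $\sup_{P^n \in \tilde{\cP}^n_0} P^n(S_n > 1/\alpha) \le \alpha$.
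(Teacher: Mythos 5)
Your proposal is correct and follows essentially the same route as the paper: Markov plus the RMLE bound, conditioning on the $D_1$ information that determines $\hat\theta_1$ (which, by Definition \ref{def:theta1} and independence of the exogenous variables, leaves the conditional law of the $D_0$ block inside $\tilde{\cP}^{|D_0|}_\theta$), and then invoking the product-form LFP of Theorem \ref{thm:kz19} together with the Choquet-integral/tail-probability identity to bound the conditional expectation of $T_n^*(\theta)$ by $\prod_{i\in D_0}E_{Q_{\theta,i}}[\Lambda_i]\le 1$. The paper conditions on $(X_i,U_i)_{i\in D_1}$ rather than on $W_{D_1}$, but this is an immaterial difference given the tower property.
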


\begin{proof}
Let $\theta\in\Theta_0$ and $\hat\theta_1$ be an estimator of $\theta$ described in Definition \ref{def:theta1}.  Note that $\hat\theta_1$ is a function of $(X_i,U_i),i\in D_1$ only, and $(X_i,U_i)$ are independently distributed across $i$. Therefore, conditioning on $(X_i,U_i),i\in D_1$ does not provide additional information on the observations from $D_0$ through selection across the two subsamples. Below, we condition on $D_1$ and treat $\hat\theta_1$ as fixed.
Let $Q_{\hat\theta_1}^n\in \tilde{\mathcal P}^n_{\hat\theta_1}$. Consider a minimax testing problem between $\tilde P^n_\theta$ and $\{Q_{\hat\theta_1}^n\}$.  By Theorem \ref{thm:kz19}, there is a product LFP $(Q_\theta^n,Q_{\hat\theta_1}^n)\in \tilde P^n_\theta\times\{Q_{\hat\theta_1}^n\}$. We use $Q_\theta^n$ below.

The proof of the theorem is the same as the proof of Theorem \ref{thm:univ_inf} up to \eqref{eq:sizeproof1}. We may bound $E_{P^{n}}[T_n|D_1]$ as follows.
\begin{align}
	E_{P^n}[T_n|D_1]&\le \sup_{\tilde P^n\in  \tilde{\mathcal P}^n_{\theta}}E_{\tilde P^n}[T_n|D_1]\notag\\
	&\le\sup_{\tilde P^n\in  \tilde{\mathcal P}^n_{\theta}}E_{\tilde P^n}[T^*_n(\theta)|D_1]\notag\\
	&=\sup_{\tilde P^n\in  \tilde{\mathcal P}^n_{\theta}}\int_0^\infty \tilde P^n\big(T^*_n(\theta)>t\big|D_1\big)dt\notag\\
 &\le \int_0^\infty \sup_{\tilde P^n\in  \tilde{\mathcal P}^n_{\theta}}\tilde P^n\big(T^*_n(\theta)>t\big|D_1\big)dt\notag\\
 &=\int_0^\infty  Q^n_{\theta}\big(T^*_n(\theta)>t\big|D_1\big)dt\notag,
\end{align}	   
where the last equality is due to $Q_\theta^n$ being the least-favorable distribution in $\tilde P^n_\theta$. Let $A^n$ be the support of $Q^n_{\theta}.$ Then,
\begin{multline}
\int_0^\infty Q^n_{\theta}\big(T^*_n(\theta)>t\big|D_1\big)dt=E_{Q^n_{\theta}}\left[\frac{\prod_{i\in D_0}q_{\hat\theta_1}(Z_i)}{\prod_{i\in D_0}q_{\theta}(Z_i)}\right]\\=\int_{A^n}\frac{\prod_{i\in D_0}q_{\hat\theta_1}(z_i)}{\prod_{i\in D_0}q_{\theta}(z_i)}
\prod_{i\in D_0}q_{\theta}(z_i)dz^{D_0}=\int_{A^n}\prod_{i\in D_0}q_{\hat\theta_1}(z_i)dz^{D_0}\le \prod_{i\in D_0}\int q_{\hat\theta_1}(z_i)dz_i=1.
\end{multline}
Note that the second equality uses the fact $\prod_{i\in D_0}q_{\theta}$ is the density of $Q_\theta^n$.
The rest of the argument is the same as the proof of Theorem \ref{thm:univ_inf}.
\end{proof}

\clearpage

\bibliographystyle{ecta}

\end{document}